\def\sp{\hskip -5pt}
\def\b1{{1\!\!1}}
\def\cE{{\ca E}}
\def\cF{{\ca F}}
\def\cI{{\ca I}}
\def\cJ{{\ca J}}
\def\cL{{\ca L}}
\def\cO{{\ca O}}
\def\cS{{\ca S}}
\def\sH{{\mathsf H}}
\def\sK{{\mathsf K}}
\def\sP{{\mathsf P}}
\def\sS{{\mathsf S}}
\def\bC{{\mathbb C}}           
\def\bI{{\mathbb I}}
\def\bR{{\mathbb R}}
\def\bS{{\mathbb S}}
\def\mL{\mathcal L}
\def\gB{{\mathfrak B}}
\def\gO{{\mathfrak O}}
\def\gP{{\mathfrak P}}
\def\gS{{\mathfrak S}}
\def\gT{{\mathfrak T}}
\def\beq{\begin{eqnarray}}
\def\eeq{\end{eqnarray}}
\newcommand{\ca}[1]{{\cal #1}}         
\def\p{\parallel}
\newtheoremstyle{thm}
{12pt}
{12pt}
{\itshape}
{}
{\itshape\bfseries}
{}
{1em}
{}
\theoremstyle{thm}
\newtheorem{theorem}{Theorem}
\newtheorem{lemma}[theorem]{Lemma}
\newtheorem{proposition}[theorem]{Proposition}
\newtheorem{definition}[theorem]{Definition}
\begin{document}

\hfill{\sl  August  2014} 
\par 
\bigskip 
\par 
\rm


\par
\bigskip
\large
\noindent
{\bf A geometric Hamiltonian description of composite quantum systems and quantum entanglement}
\bigskip
\par
\rm
\normalsize 


\noindent  {\bf Davide Pastorello}\\
\par
\noindent Department of  Mathematics, University of Trento \& INFN (section of Trento), via Sommarive 14, 38123 Povo (Trento), Italy.\\
\par
\noindent   e-mail: pastorello@science.unitn.it
\smallskip

 \normalsize

\par

\rm\normalsize


\rm\normalsize


\par
\bigskip

\noindent
\small

\section*{Abstract}
Finite-dimensional Quantum Mechanics can be geometrically formulated as a proper classical-like Hamiltonian theory in a projective Hilbert space. The description of composite quantum systems within the geometric Hamiltonian framework is discussed in this paper. 
As summarized in the first part of this work, in the Hamiltonian formulation  the phase space of a quantum system is the K\"ahler manifold given by the complex projective space $\sP(\sH)$ of the Hilbert space $\sH$ of the considered quantum theory. However the phase space of a bipartite system must be $\sP(\sH_1\otimes\sH_2)$ and not simply $\sP(\sH_1)\times\sP(\sH_2)$ as suggested by the analogy with Classical Mechanics. A part of this paper is devoted to manage this problem. 
In the second part of the work, a definition of quantum entanglement and a proposal of entanglement measure are given in terms of a geometrical point of view (a rather studied topic in recent literature). Finally two known separability criteria are implemented in the Hamiltonian formalism. 

\section{Introduction}


In recent years there have been several attempts to study quantum entanglement considering the geometrization of finite-dimensional  Quantum Mechanics, exploiting the representation of pure states as points of the complex projective space $\sP(\sH)$ (constructed on the Hilbert space $\sH$ of the theory) as a real manifold with a complex K\"ahler structure. Interesting examples in this field are: \cite{AGMV2} where the authors consider invariant operator-valued tensor fields on the local unitary group $U(n)\times U(n)$ to identify entanglement candidates, while in \cite{AGMV} the authors use the pull-back of Fubini-Study metric to the orbits of the local unitary group. Or else \cite{kus} where
equally entangled states are classified in terms of orbits of (unitarily represented) group actions on the space of states and the degree of the degeneracy of the symplectic form is suggested as an entanglement measure. In \cite{BZ} there is a huge characterization of entanglement and its measure in terms of Fubini-Study metric and induced distance on projective space. 

In this paper a different point of view is adopted to state a proposal of entanglement measure within geometric framework, it is based on the description of quantum states in terms of (Liouville) denisities on projective space/phase space \cite{Gibbons,AS,DV2}. In particular we assume $\dim\sH>2$ in order to characterize states and observables completely in terms of $\cL^2$-functions on $\sP(\sH)$, exploting the results presented in \cite{DV1,DV2}.

A geometrical Hamiltonian notion of entanglement measure is given in terms of the Liouville measure induced by the symplectic form on $\sP(\sH)$. Moreover it is shown how the machinery developed in \cite{DV2} (in particular one-to-one correspondence between operators and frame functions) allows to translate two well-known separability criteria \cite{thirring,Horodecki,Witte} for mixed states in terms of pure geometric Hamiltonian formulation.

One of the crucial points of geometric formulation on projective space is the the passage from Hermitian operators (quantum observables) to real-valued functions. For example, given a Hermitian operator $A$ on $\sH$ one can define the function:
$$e_A(\psi):=\frac{\langle\psi|A\psi\rangle}{\langle\psi|\psi\rangle}\qquad\psi\in\sH\setminus \{0\},$$
which can be viewed as the pull-back of a function defined on $\sP(\sH)$ \cite{AGMV}. The eigenvectors of $A$ are the critical points of $e_A$, i.e. the vectors $\psi_0$ such that $de_A(\psi_0)=0$ and the values of $e_A$ at the critical points are the eigenvalues of $A$ \cite{CCGM}. Thus the operator $A$ can be recovered by $e_A$ thorough its spectral decomposition:
$$A=\sum_{\psi_0\, critical} e_A(\psi_0) P_{\psi_0}\qquad P_{\psi_0}=\frac{|\psi_0\rangle\langle\psi_0|}{\langle\psi_0|\psi_0\rangle}.$$
\\
In this paper we associate the function acting on the projective space $f_A:\sP(\sH)\rightarrow \bR$ to each quantum observable $A$, in the following way:
 $$f_A(p)=(n+1)tr(Ap)-tr(A)\qquad n=\dim\sH,$$
we prove  that the operator $A$ can be recovered by $f_A$ as:
$$A=\int_{\sP(\sH)}f_A(p)\, p\, d\nu(p),$$
where $\nu$ is the Liouville measure induced by the symplectic form on $\sP(\sH)$, this expression has to be interpreted in view of 1-to-1 correspondence of $\sP(\sH)$ with rank-1 projectors on $\sH$, in this sense it is a sort of spectral decomposition.

Let us briefly summarize some notions about our framework: Starting from a celebrated work of Kibble \cite{Kibble} a \emph{quantum Hamiltonian formulation} for finite-dimensional quantum systems has been developed (a celebrated work in this regard is  \cite{AS}). The phase space of the Hamiltonian theory is the projective space $\sP(\sH_n)$, constructed out the $n$-dimensional Hilbert space $\sH_n$, that is a real $(2n-2)$-dimensional smooth manifold with an almost K\"ahler structure. Let us give a quick introduction to this geometric structure. $u(n)$ denotes the Lie algebra of the unitary group $U(n)$, then $iu(n)$ denotes the real vector space of self-adjoint operator on $\sH_n$ and $\gB(\sH_n)$ is the $C^*$-algebra of bounded\footnote{Every linear operator is bounded (and trace-class) on a finite dimensional Hilbert space, but we mantain the general terminology also adopted in infinite dimension.}  operators on $\sH_n$ that is the concrete \emph{observable algebra} in the standard formulation of Quantum Mechanics.

Fixed a point $p\in\sP(\sH_n)$, one can prove that any tangent vector $v\in T_p\sP(\sH_n)$ has the form $v=-i[A_v,p]$ for some $A_v\in iu(n)$. A symplectic form can be defined on $\sP(\sH_n)$ as:
\begin{equation}\label{symp}
\omega_p(u,v):=-i\kappa \,tr([A_u,A_v]p)\qquad u,v\in T_p\sP(\sH_n),
\end{equation}         
with $\kappa>0$. Moreover a Riemann metric can be defined on the manifold:

 \begin{equation}\label{fs}
g_p(u,v):=-\kappa \,tr\left(([A_u,p][A_v,p]+[A_v,p][A_u,p])p\right)\qquad u,v\in T_p\sP(\sH_n),
\end{equation} 

\vspace{0.5cm}
\noindent
called {\bf Fubini-Study metric}. Considering the map given by:

$$j_p:T_p\sP(\sH_n)\ni v\mapsto i[v,p]\in T_p\sP(\sH_n), $$
\\
we have $p\mapsto j_p$ is smooth  and $j_pj_p=-id$ for any $p\in\sP(\sH_n)$. The following relation holds:

$$\omega_p(u,v)=g_p(u,j_pv)\quad \forall p\in\sP(\sH_n).$$
\\
Thus $(\omega,g,j)$ is an almost K\"ahler structure on $\sP(\sH_n)$, in particular there is a symplectic structure where notions of Hamiltonian fields, Poisson brakets, Liouville measure can be defined. 

In \cite{DV2}, we discussed all possible prescriptions to associate each quantum observable $A\in iu(n)$ to a classical observable $f_A:\sP(\sH_n)\rightarrow\bR$, i.e. a real function defined on the projective space $\sP(\sH_n)$ in order to obtain a self-consistent Hamiltonian classical-like theory. Similarly we investigated how to associate each density matrix $\sigma$ to a Liouville density $\rho_\sigma:\sP(\sH_n)\rightarrow\bR$ for computing classical-like expectation values. Imposing several physical requirements \cite{DV2} all the prescriprions to set up a meaningful classical-like formulation are labelled by a positive real number $\kappa$ and given by the so-called \emph{inverse quantization maps}:

\begin{equation}\label{O}
\cO:iu(n)\ni A\mapsto f_A,
\end{equation}
with 
\begin{equation}\label{o}
f_A(p)=\kappa tr(Ap)+\frac{1-\kappa}{n}tr(A)\qquad \kappa>0,
\end{equation}

\vspace{0.5cm}
\noindent
for observables. And about states:
\begin{equation}\label{S}
\cS:\sS(\sH_n)\ni \sigma\mapsto \rho_\sigma,
\end{equation}
with 
\begin{equation}\label{s}
\rho_\sigma(p)=\frac{n(n+1)}{\kappa} \,tr(\sigma p)+\frac{\kappa-(n+1)}{\kappa}\qquad \kappa>0,
\end{equation}

\vspace{0.3cm}
\noindent
where $\sS(\sH_n)$ denotes the set of density matrices on $\sH_n$, i.e. the positive and normalized trace-class operators. The positive constant $\kappa$ is a degree of freedom of the whole theory appearing in definitions (\ref{symp}) and (\ref{fs}). Using the maps $\cO$ and $\cS$ to obtain classical-like observables and states we have this remarkable result for any $\kappa>0$:

\begin{equation}
\langle A\rangle_\sigma=tr(A\sigma)=\int_{\sP(\sH_n)} f_A \rho_\sigma d\nu,
\end{equation}
\\
where $\nu$ is the Liouville measure w.r.t. the symplectic form $\omega$. 

A good choice is $\kappa=n+1$: The classical observable associated to $A\in iu(n)$ and the Liouville density associated to $\sigma\in\sS(\sH_n)$ take respectively the form:

\begin{equation}\label{osst}
f_A(p)=(n+1)tr(Ap)-tr(A)\qquad\mbox{and}\qquad \rho_\sigma(p)=tr(\sigma p),
\end{equation}
\\
the substitution $\kappa=n+1$ in (\ref{s}) produces $\rho_\sigma=n\,\,tr(\sigma p)$, but we can remove the multiplicative factor $n$ changing the normalization of the measure $\nu\rightarrow n\nu$. While in the choice $\kappa=1$ the classical-like observables are the standard expectation values functions (as in \cite{AS} and \cite{Gibbons}) but the states are not positive and normalized then the interpretation of probability densties drops down. For this reason we adopt the choice $\kappa=n+1$.
\vspace{0.5cm}
\section{Observables and states in geometric Hamiltonian formulation}

Every point $p$ of  $\sP(\sH_n)$ can be interpreted as a rank-1 orthogonal projector $p=|\psi\rangle\langle\psi|$ with $\psi\in\sH_n$ and $\parallel\psi\parallel=1$, thus for any pair $p,p'\in\sP(\sH_n)$ the distance $d_2(p,p')=\sqrt{\parallel\psi\parallel^4+\parallel\psi'\parallel^4-2|\langle\psi|\psi'\rangle|^2}$ is well-defined and $d_2(p,p')=\sqrt{2}$ if and only if $\psi\perp\psi'$. A set $N\subset\sP(\sH_n)$ is said to be a \textbf{basis} of  $\sP(\sH_n)$ if  $d_2(p,p')=\sqrt{2}$ for $p,p'\in N$ with $p\not =p'$ and $N$ is a maximal set w.r.t. this property.

\begin{definition}
A map $F:\sP(\sH_n)\rightarrow \bC$ is called  \textbf{frame function} on $\sH_n$ if there is a number $W_F\in\bC$ such that:
$$\sum_{p\in N} F(p)=W_F\quad \mbox{for every basis}\,\,N\,\,\mbox{of}\,\,\sP(\sH_n).$$
\end{definition}

\noindent
Frame functions, introduced by Gleason to prove his celebrated theorem \cite{Gleason}, play a crucial r\^ ole in the characterization of quantum observables and quantum states in terms of functions on the projective space within the classical-like Hamiltonian formulation.

Consider the vector space $\cL^2(\sP(\sH_n),\nu)$ of squared $\nu$-integrable functions on $\sP(\sH_n)$, let us introduce the closed subpace:

\beq
\cF^2(\sH_n):=\left\{F:\sP(\sH_n)\rightarrow\bC| F\in\cL^2(\sP(\sH_n),\nu), F\,\mbox{is a frame function}\right\} \eeq
\\
One can prove that if $n>2$ then $\cF^2(\sH_n)$ is isomorphic to $\gB(\sH_n)$ \cite{DV1,DV2}.  For this reason, we assume $\dim\sH_n=n>2$ without further specifications.

An isomorphism of vector spaces from $\gB(\sH_n)$ to $\cF^2(\sH_n)$ can be defined by the linear extension of $\cO$ (or $\cS$).

The extended action of $\cO$ can be used to define the $C^*$-algebra of observables in terms of frame functions $(\cF^2(\sH_n),\star)$ where $\star$ is the *-algebra product given by:
\begin{equation}
f\star g:=\cO(\cO^{-1}(f)\cO^{-1}(g))\qquad \forall f,g\in\cF^2(\sH_n).
\end{equation}
The explicit calculation for $\kappa=n+1$ yields:
\begin{equation}
  f \star g = \frac{i}{2}\{f, g\} + \frac{1}{2} G(df,dg) + \frac{n}{n+1} \left(\frac{fg}{n}- \int_{\sP(\sH_n)} \sp \sp\sp\sp f g d\nu  - f\sp  \int_{\sP(\sH_n)}\sp \sp\sp\sp g d\nu-g  \sp \int_{\sP(\sH_n)}\sp \sp\sp\sp f d\nu \right), 
\end{equation}
\\
where $G$ is the scalar product on one-forms induced by Fubini-Study metric. The $*$-involution is given by the complex conjugation, and the $C^*$-norm can be defined as $|||f|||:=\p\cO^{-1}(f)\p$ (explicit calculation in \cite{DV2}); thus $\cO$ is an isomorphism of $C^*$-algebras and the physical observables are given by the real functions in $\cF^2(\sH_n)$.

The simplest way to obtain an isomorphism of vector spaces, is taking the extension of the inverse quantization map for states (in the choice $\kappa=n+1$), $\cS:\gB(\sH_n)\rightarrow \cF^2(\sH_n)$:

\begin{equation}\label{S}
\cS(\sigma)(p):=tr(\sigma p)\qquad \forall\sigma\in\gB(\sH_n).
\end{equation}  
\\
The subset in the range of $\cS$ representing the quantum states as \emph{Liouville densities} is given by the image of the set $\sS(\sH_n)$ of density matrtices through $\cS$; Denoting it with the same name:

\begin{equation} \label {frame states}
\sS(\sH_n)=\left\{\rho\in\cF^2(\sH_n)\big |\rho(p)\geq0 \,\forall p\in\sP(\sH_n)\,,\, W_\rho=1\right\}.
\end{equation}
\\By definition, Liouville densities are valued in $[0,1]$, in agreement with interpretation of probability density. Since $\cS:\gB(\sH_n)\rightarrow \cF^2(\sH_n)$ preserves the convex structure of the set of states,  $\sS(\sH_n)$ is a convex set in $\cF^2(\sH_n)$ and its extremal elements represent pure states. To give a complete characterization of states in terms of frame functions, let us invoke the trace-integral formulas \cite{DV2}:

\begin{equation}\label{trace-integral}
\int_{\sP(\sH_n)} \rho(p)d\nu(p)=tr(\cS^{-1}(\rho))\qquad\forall \rho\in\cF^2(\sH_n).
\end{equation}
\begin{equation}\label{trace-integral2}
\int_{\sP(\sH_n)} \overline{\rho(p)}\rho'(p)d\nu(p)=\frac{1}{n+1}(tr(A^\dagger B)+tr(A^\dagger)tr(B)),
\end{equation}
\\
\\
where $A=\cS^{-1}(\rho)$ and $B=\cS^{-1}(\rho')$. 
Using (\ref{trace-integral}) we can give the definitive definition of the set of Liouville densities:

\vspace{0.2cm}
\begin{equation} \label {densities}
\sS(\sH_n)=\left\{\rho\in\cF^2(\sH_n)\left|\right.\rho(p)\geq0 \,\forall p\in\sP(\sH_n)\,,\, \int_{\sP(\sH_n)}\rho d\nu=1\right\},
\end{equation}
\\
\\
If $\sigma$ and $\sigma'$ are density matrices on $\sH_n$ then their difference $\sigma-\sigma'$ is a Hermitian operator with trace zero so Hilbert-Schmidt distance coincides with the $\cL^2$-distance of associate densities up to a multiplicative factor: If $\rho=\cS(\sigma)$ and $\rho'=\cS(\sigma')$ then:

\begin{equation}\label{dist}
{\int_{\sP(\sH_n)} |\rho-\rho'|^2 d\nu}={\frac{1}{n+1}\left(tr[(\sigma-\sigma')^2]+tr(\sigma-\sigma')^2\right)}={\frac{1}{n+1}tr[(\sigma-\sigma')^2]},
\end{equation}

\vspace{0.5cm}
\noindent
by linearity and formula (\ref{trace-integral2}), i.e. $d_{HS}(\sigma,\sigma')=\sqrt{n+1}d_{\cL^2}(\rho,\rho')$ for every pair $\sigma$ and $\sigma'$ of density matrices.
The following proposition establishes a necesssary and sufficient condition such that a Liouville density describes a pure state:
\begin{proposition}
The Liouville density $\rho\in\sS(\sH_n)$ represents a  pure state if and only if:
\beq\label{purenorm}
\p\rho\p_{2}= \int_{\gP(\sH_n)} |\rho(p)|^2d\nu(p)=\frac{2}{n+1}.
\eeq
\end{proposition}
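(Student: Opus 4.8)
The plan is to reduce the statement entirely to the standard purity criterion $tr(\sigma^2)=1$ for density matrices, by specializing the second trace-integral formula (\ref{trace-integral2}) to the diagonal. Since $\rho\in\sS(\sH_n)$, the operator $\sigma:=\cS^{-1}(\rho)$ is a density matrix, hence self-adjoint ($\sigma^\dagger=\sigma$) and normalized ($tr(\sigma)=1$). First I would put $\rho'=\rho$ in (\ref{trace-integral2}), so that $A=B=\sigma$, which gives
\[
\int_{\sP(\sH_n)}|\rho(p)|^2\,d\nu(p)=\frac{1}{n+1}\left(tr(\sigma^2)+tr(\sigma)^2\right)=\frac{1}{n+1}\left(tr(\sigma^2)+1\right),
\]
the last equality using normalization. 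Thus the $\cL^2$-norm of $\rho$ is an affine function of the purity $tr(\sigma^2)$, and the whole proposition becomes a statement about $tr(\sigma^2)$.

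Next I would invoke the elementary characterization of purity: for a density matrix $\sigma$ one has $tr(\sigma^2)\le 1$, with equality if and only if $\sigma$ is a rank-1 orthogonal projector, i.e. a \emph{pure} state. This is immediate after diagonalizing $\sigma$: its eigenvalues $\lambda_i$ lie in $[0,1]$ and satisfy $\sum_i\lambda_i=1$, so that $tr(\sigma^2)=\sum_i\lambda_i^2\le\sum_i\lambda_i=1$, and equality forces each $\lambda_i\in\{0,1\}$, hence exactly one eigenvalue equal to $1$ and all others zero. By the discussion preceding the proposition, purity of $\sigma$ is precisely the statement that $\rho=\cS(\sigma)$ is an extremal element of $\sS(\sH_n)$, i.e. represents a pure state.

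Combining the two steps closes the argument in both directions: if $\rho$ represents a pure state then $tr(\sigma^2)=1$, so the displayed integral equals $\tfrac{1}{n+1}(1+1)=\tfrac{2}{n+1}$; conversely, if $\int_{\sP(\sH_n)}|\rho(p)|^2\,d\nu(p)=\tfrac{2}{n+1}$, then $tr(\sigma^2)+1=2$, whence $tr(\sigma^2)=1$ and $\sigma$ is pure. I do not anticipate a genuine obstacle here: once (\ref{trace-integral2}) is applied on the diagonal, all the content collapses onto the purity criterion, and the only point requiring care is that the additive term $tr(\sigma)^2$ equals $1$ for \emph{every} density matrix, which is exactly the normalization built into the definition of $\sS(\sH_n)$.
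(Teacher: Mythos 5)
Your argument is correct and follows essentially the same route as the paper's own proof: both apply the trace-integral formula (\ref{trace-integral2}) with $\rho'=\rho$ to get $\int_{\sP(\sH_n)}|\rho|^2\,d\nu=\tfrac{1}{n+1}\left(tr(\sigma^2)+1\right)$ and then reduce everything to the purity criterion $tr(\sigma^2)=1$. Your version is in fact slightly more complete, since you spell out the elementary eigenvalue argument for ``$tr(\sigma^2)=1$ iff $\sigma$ is rank-1'' and state the converse direction explicitly, both of which the paper leaves implicit.
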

\begin{proof}
A density matrix $\sigma\in\gB(\sH_n)$ is a pure state ($\sigma=|\psi\rangle\langle\psi|$, for $\psi\in\sH_n$ with $\parallel\psi\parallel=1)$ if and only if $tr(\sigma^2)=1$. Let $\sigma$ be a pure state and $\rho=\cS(\sigma)$. Using (\ref{trace-integral2}), we can calculate:
$$\int_{\gP(\sH_n)} |\rho|^2d\nu=\frac{1}{n+1}(tr(\sigma^2)+1),$$
the condition of purity $tr(\sigma^2)=1$ becomes:
$$\int_{\gP(\sH_n)} |\rho|^2d\nu=\frac{2}{n+1}.$$
\end{proof}

\noindent
Even if $\rho$ is a real function, we write the tautological square absolute value $|\rho|^2$ to stress that (\ref{purenorm}) is the $\mL^2$-norm in $\cF^2(\sH_n)$. 

\vspace{0.5cm}
\section{Re-quantization of the classical-like picture}

The aim of this section is finding a way to calculate explicitly an operator from the associated function on the projective space (re-quantization). In other words we construct the inverse $\cS^{-1}$ of the map $\cS$; This notion will be used further to implement the tensor products of operators in the Hamiltonian formulation. Indeed the term \emph{re-quantization} is used with the following meaning: It is a prescription to associate a self-adjoint operator to each classical-like observable and a density matrix to each Liouville density, thus it is the translation from the Hamiltonian formalism to the standard formalism of QM. 
\begin{theorem}\label{representation}
Let $\sH_n$ a finite-dimensional Hilbert space with dimension $n$ larger than 2. Let $\cF^2(\sH_n)$ be the space of square-integrable frame functions on $\sH_n$. If $\rho\in\cF^2(\sH_n)$, then the operator $\sigma\in\gB(\sH_n)$ such that $\rho(p)=tr(\sigma p)$ (i.e. $\sigma=\cS^{-1}(\rho)$) is given by:

\begin{equation}\label{inverse}
\sigma=(n+1)\int_{\sP(\sH_n)}\rho(p) \,\left(p-\frac{1}{n+1}\bI\right)\,d\nu(p)
\end{equation}
\\
where $\bI$ is the identity operator.
\end{theorem}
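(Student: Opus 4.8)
The plan is to exploit that $\cS:\gB(\sH_n)\to\cF^2(\sH_n)$ is already known to be a linear isomorphism, so it suffices to verify that the operator-valued map on the right-hand side of (\ref{inverse}) is a left inverse of $\cS$. Writing $\Phi(\rho):=(n+1)\int_{\sP(\sH_n)}\rho(p)\left(p-\frac{1}{n+1}\bI\right)d\nu(p)$, I would substitute $\rho(p)=tr(\sigma p)$ for an arbitrary $\sigma\in\gB(\sH_n)$ (every $\rho\in\cF^2(\sH_n)$ arises this way by surjectivity of $\cS$) and show $\Phi(\cS(\sigma))=\sigma$; bijectivity of $\cS$ then forces $\Phi=\cS^{-1}$. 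Splitting $\Phi$ along $p-\frac{1}{n+1}\bI$ produces a scalar multiple of $\bI$ plus the operator-valued integral $T:=\int_{\sP(\sH_n)}tr(\sigma p)\,p\,d\nu(p)$.

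The constant part is immediate: $\int_{\sP(\sH_n)}tr(\sigma p)\,d\nu(p)=\int_{\sP(\sH_n)}\cS(\sigma)(p)\,d\nu(p)=tr(\sigma)$ by the trace-integral formula (\ref{trace-integral}), so after multiplication by $(n+1)$ the $-\frac{1}{n+1}\bI$ term contributes exactly $-tr(\sigma)\,\bI$ to $\Phi(\cS(\sigma))$.

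The real work is identifying $T$. I would make sense of this operator-valued integral weakly, by computing its trace pairing against an arbitrary $X\in\gB(\sH_n)$, which reduces it to the scalar integral $tr(TX)=\int_{\sP(\sH_n)}tr(\sigma p)\,tr(pX)\,d\nu(p)$. The idea is to coerce the two factors into the shape of the quadratic trace-integral identity (\ref{trace-integral2}). Since each $p$ is self-adjoint, $tr(\sigma p)=\overline{tr(\sigma^\dagger p)}=\overline{\cS(\sigma^\dagger)(p)}$, whereas $tr(pX)=\cS(X)(p)$. Applying (\ref{trace-integral2}) with $\rho=\cS(\sigma^\dagger)$ and $\rho'=\cS(X)$, so that $A=\sigma^\dagger$ and $B=X$ in that formula, gives $tr(TX)=\frac{1}{n+1}\big(tr(\sigma X)+tr(\sigma)\,tr(X)\big)=\frac{1}{n+1}tr\big((\sigma+tr(\sigma)\bI)X\big)$. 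As this holds for every $X$ and the bilinear form $(S,X)\mapsto tr(SX)$ is non-degenerate on $\gB(\sH_n)$, I conclude $T=\frac{1}{n+1}\big(\sigma+tr(\sigma)\bI\big)$.

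Assembling the pieces, $\Phi(\cS(\sigma))=(n+1)T-tr(\sigma)\bI=\sigma+tr(\sigma)\bI-tr(\sigma)\bI=\sigma$, which is the assertion. The only genuinely delicate points are organizational: justifying the weak (trace-pairing) interpretation of the operator-valued integral, and tracking the complex conjugation so that the adjoints land correctly when invoking (\ref{trace-integral2}). I expect the conjugation bookkeeping in the quadratic identity to be the one place where a $\dagger$ or a conjugate could slip, so that is where I would be most careful; everything else is linear algebra once (\ref{trace-integral}) and (\ref{trace-integral2}) are in hand.
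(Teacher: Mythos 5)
Your proof is correct and follows essentially the same route as the paper's: the identity term is handled by the linear trace-integral formula (\ref{trace-integral}), and the operator-valued integral is identified by pairing against test operators and invoking the quadratic formula (\ref{trace-integral2}). The only cosmetic difference is that you pair against arbitrary $X\in\gB(\sH_n)$ via the non-degenerate trace form, whereas the paper tests against rank-one projectors $|\varphi\rangle\langle\varphi|$ and upgrades $\langle\varphi|\sigma\varphi\rangle$ to an operator identity by sesquilinearity --- the same argument in a slightly different dress, and your conjugation bookkeeping ($tr(\sigma p)=\overline{\cS(\sigma^\dagger)(p)}$) is handled correctly.
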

\begin{proof}

Let $\varphi$ be a vector of the unit sphere $\bS^{2n-1}=\{\psi\in\sH_n\big|\p\psi\p=1\}$. Since a point  $p\in\sP(\sH_n)$ can be represented by a rank-1 orthogonal projector then we can take the standard expectation value $\langle\varphi|p\,\varphi\rangle=tr(p|\varphi\rangle\langle\varphi|)$.
\\
$f_\varphi(p):=tr(p|\varphi\rangle\langle\varphi|)$ is the function $\cS(|\varphi\rangle\langle\varphi|)$ given by the normalized pure state $|\varphi\rangle\langle\varphi|$. Applying  (\ref{trace-integral2}) we can write the follwoing relation:

$$\int_{\sP(\sH_n)}{f_\varphi(p)}\rho(p) d\nu(p)=\frac{1}{n+1}(tr(|\varphi\rangle\langle\varphi|\, \sigma)+tr(\sigma)),$$
Thus:
$$tr\left(|\varphi\rangle\langle\varphi|\,\sigma\right)=\langle\varphi|\sigma\varphi\rangle=(n+1)\int_{\sP(\sH_n)}\rho(p)\langle\varphi|p\varphi\rangle d\nu(p)-tr(\sigma),$$
\\
the second equality is true for every $\varphi\in\bS^{2n-1}$ i.e. for every $\varphi\in\sH_n$ by sesquilinearity. Thus:
$$\sigma=(n+1)\int_{\sP(\sH_n)}\rho(p) \,p\,d\nu(p)-tr(\sigma) \mathbb I,$$
using (\ref{trace-integral}):
$$\sigma=(n+1)\int_{\sP(\sH_n)}\rho(p) \,p\,d\nu(p)-\int_{\sP(\sH_n)}\rho(p)d\nu(p) \mathbb I,$$
\\
that is the statement of the proposition:

$$\sigma=(n+1)\int_{\sP(\sH_n)}\rho(p) \,\left(p-\frac{1}{n+1}\bI\right)\,d\nu(p).$$
\end{proof}
\noindent
In other words, the action of $\cS^{-1}$ on the function $\rho\in\cF^2(\sH)$ is obtained by the smearing of $\rho$ with the operator:
\beq\label{qd}
\gB(\sH_n)\ni\gS(p):=(n+1)p-\bI\qquad p\in\sP(\sH_n),
\eeq
\beq\label{smearing}
\sigma=\int_{\sP(\sH_n)} \rho(p)\,\gS(p) d\nu(p).
\eeq
\\
For this reason let us call the operator-valued function $\gS:\sP(\sH_n)\rightarrow \gB(\sH_n)$  \emph{re-quantization distribution} since its smearing action on each Liouville density gives the correspondent density matrix. 
The statement of theorem \ref{representation} can be used to construct a re-quantization prescription to obtain a quantum observable (a self-adjoint operator) smearing  a classical-like observable with a re-quantization distribution. We calculate the inverse map of $\cO:iu(n)\ni A\mapsto f_A$ defined in (\ref{o}) in the general form. Let $A\in iu(n)$, by direct computation:
 
$$\int_{\sP(\sH_n)}f_A(p)\gS(p)d\nu(p)=\kappa\int_{\sP(\sH_n)}tr(Ap)\gS(p)d\nu(p)+\frac{1-\kappa}{n} tr(A) \int_{\sP(\sH_n)} \gS(p) d\nu(p),$$
\\
exploiting the statement of theorem \ref{representation} and noting that $\int \gS(p) d\nu(p)=\bI$, we can write:

$$ \int_{\sP(\sH_n)}f_A(p)\gS(p)d\nu(p)=\kappa A+\frac{1-\kappa}{n}tr(A)\bI,$$
\\
an easy computation shows $\int f_A(p) d\nu(p)=tr(A)$ for every $\kappa >0$, thus:

\beq
A=\frac{1}{\kappa}\int_{\sP(\sH_n)}f_A(p)\left[\gS(p)-\frac{1-\kappa}{n}\bI\right]d\nu(p).
\eeq
\\
The general re-quantization distribution for observables, i.e. the operator-valued function $\gO:\sP(\sH_n)\rightarrow \gB(\sH_n)$ such that for any $A\in iu(n)$:

$$A=\int_{\sP(\sH_n)} f_A(p) \gO(p) d\nu(p)$$
is given by:
\beq
\gO(p) = \frac{(n+1)}{\kappa}p-\left(\frac{n+1-\kappa}{\kappa\,n}\right )\bI.
\eeq
\\
In the choice $\kappa=n+1$, where the action of $\cO$ is given by (\ref{osst}), re-quantization distribution is simply $\gO(p)=p$. Summarizing: If the prescription to obtain a classical-like Hamiltonian formulation of a finite-dimensional quantum theory is given by (\ref{osst}) then the re-quantization procedure is given by the following formulas:
\beq
A=\int_{\sP(\sH_n)}f_A(p)p\,d\nu(p),
\eeq
\beq
\sigma=\int_{\sP(\sH_n)} \rho(p)\,\gS(p) d\nu(p),
\eeq
\\
for every classical-like observable (real functions in $\cF^2(\sH_n)$) and Liouville density on $\sP(\sH_n)$.

\vspace{0.5cm}

\section{Composite systems}

In Classical Mechanics the phase space of a composite system is given by the cartesian product of phase spaces of each subsystem. While if one consider a quantum composite system then the phase space must be the projective space of the tensor product of the Hilbert spaces of the subsystems, according to standard Quantum Mechanics. We can consider a bipartite quantum system which consists in two subsystems described in the finite-dimensional Hilbert spaces $\sH$ and $\sK$: The phase space (in the geometric Hamiltonian sense) of the system is given by $\sP(\sH\otimes\sK)$ and not by $\sP(\sH)\times\sP(\sK)$, however the second one is embedded in the first one by \emph{Segre embedding}. 
Let us recall few fundamental ideas: Consider $A\in\gB(\sH)$ and $B\in\gB(\sK)$, the \emph{tensor product of two operators}, $A\otimes B$, can be defined in the following way on the product vectors $\psi\otimes\phi$:

\begin{equation}
 A\otimes B(\psi\otimes\phi):=A\psi\otimes B\phi,
\end{equation}
\\
and the action extends to whole Hilbert space $\sH\otimes\sK$ by linearity. The span of all $A\otimes B$ can be denoted by $\gB(\sH)\otimes\gB(\sK)$ and it coincides with $\gB(\sH\otimes\sK)$ as a general result on Von Neumann algebras. Of course, not all the operators in  $\gB(\sH\otimes\sK)$ are in the product form $A\otimes B$, but considering a general operator in $\gB(\sH\otimes\sK)$ we can define a notion of \emph{restriction} of such operator to $\sH$ or $\sK$, via the so-called \emph{partial trace}.

\begin{definition}\label{pt}
Let be $A\in\gB(\sH\otimes\sK)$. The {\bf partial trace} of $A$ w.r.t. $\sK$ (similarly $\sH$) is the unique operator $tr_\sK (A)\in\gB(\sH)$ such that:

\begin{equation}\label{ptd}
tr\left[tr_\sK (A)B\right]=tr[A( B\otimes \mathbb I_\sK)]\qquad \forall B\in\gB(\sH).
\end{equation}
\\
where $\mathbb I_\sK$ denotes the identity operator on $\sK$.
\end{definition}

Consider a quantum system made up by two quantum subsystems which are described by the observable algebras $\gB(\sH)$ and  $\gB(\sK)$. According to standard quantum theory the observable algebra  of the composite system is given by the tensor product $\gB(\sH)\otimes\gB(\sK)=\gB(\sH\otimes\sK)$.

\begin{definition}
A state $\sigma\in \gB(\sH\otimes\sK)$ is called {\bf separable} if it can be written as:

$$\sigma=\sum_i \lambda_i \sigma^{(1)}_i\otimes \sigma^{(2)}_i,$$
\\
whit weights $\lambda_i> 0$ and states $\sigma_i^{(1)}\in\gB(\sH)$, $\sigma_i^{(2)}\in\gB(\sK);$ otherwise it is called {\bf entangled}. 
\end{definition}
\noindent According to the above definition, a pure state $\sigma\in \gB(\sH\otimes\sK)$  is separable if and only if it is of product form $\sigma=\sigma_1\otimes\sigma_2$. 
In this regard let us introduce the well-known \emph{Segre embedding}. The tensor product map $\otimes:\sH\times\sK\rightarrow\sH\otimes\sK$ induces a canonical embedding of the cartesian product of complex projective spaces in the complex projective space of the tensor product, called {\bf Segre embedding}:

$$Seg:\sP(\sH)\times\sP(\sK)\rightarrow\sP(\sH\otimes\sK),\qquad\qquad$$
\begin{equation}
Seg:(|\psi\rangle\langle\psi,|\phi\rangle\langle\phi|)\mapsto |\psi\otimes\phi\rangle\langle\psi\otimes\phi|,\qquad\p\psi\p=\p\phi\p=1.
\end{equation}

\vspace{0.5cm}
\noindent
The action of Segre embedding can be written as $Seg(p_1,p_2)=p_1\otimes p_2$ for $p_1\in\sP(\sH)$ and $p_2\in\sP(\sK)$, representing pure states as points of projective space. In the standard formulation, the image $Seg(\sP(\sH)\times\sP(\sK))$ gives the separable pure states of the composite system. Here we use the Segre embedding to explicitly construct the isomorphism  between $\cF^2(\sH)\otimes\cF^2(\sK)$ and $\cF^2(\sH\otimes\sK)$.

\begin{proposition}\label{isoseg}
Let $\sH$ and $\sK$ be finite-dimensional Hilbert spaces with $\dim\sH,\dim\sK>2$.
The map $\cI:\cF^2(\sH\otimes\sK)\rightarrow\cF^2(\sH)\otimes\cF^2(\sK)$ defined as the pull-back by Segre embedding:
\begin{equation}\label{iso}
\cI(f)=Seg^* f
\end{equation}
is an isomorphism.
\end{proposition}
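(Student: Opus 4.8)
The plan is to reduce the statement to the fact, recalled above, that for any finite-dimensional Hilbert space of dimension larger than $2$ the inverse quantization map $\cS(\sigma)(p)=tr(\sigma p)$ is a linear isomorphism between $\gB$ and $\cF^2$. Since $\dim\sH,\dim\sK>2$ forces $\dim(\sH\otimes\sK)=\dim\sH\cdot\dim\sK>2$ as well, three such isomorphisms are available: $\cS_\sH$, $\cS_\sK$ and $\cS_{\sH\otimes\sK}$. Together with the finite-dimensional identification $\gB(\sH\otimes\sK)=\gB(\sH)\otimes\gB(\sK)$, the idea is to exhibit $\cI$ as the composite $(\cS_\sH\otimes\cS_\sK)\circ\cS_{\sH\otimes\sK}^{-1}$, which is manifestly an isomorphism because each factor is one.

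First I would fix the realization of the target: a generator $F\otimes G\in\cF^2(\sH)\otimes\cF^2(\sK)$ is read as the honest function $(p_1,p_2)\mapsto F(p_1)G(p_2)$ on $\sP(\sH)\times\sP(\sK)$, a correspondence that is injective in finite dimension. With this in place I would compute $\cI$ on the generators of $\cF^2(\sH\otimes\sK)$. For $\sigma=A\otimes B$ with $A\in\gB(\sH)$ and $B\in\gB(\sK)$, set $f=\cS_{\sH\otimes\sK}(A\otimes B)$, so that $f(p)=tr((A\otimes B)p)$. Evaluating the pull-back on a Segre point and using $Seg(p_1,p_2)=p_1\otimes p_2$, the relation $(A\otimes B)(p_1\otimes p_2)=(Ap_1)\otimes(Bp_2)$, and the factorization $tr(X\otimes Y)=tr(X)tr(Y)$,
$$(\cI f)(p_1,p_2)=f(p_1\otimes p_2)=tr\big((A\otimes B)(p_1\otimes p_2)\big)=tr(Ap_1)\,tr(Bp_2),$$
which, by the very definition of $\cS$, is $\cS_\sH(A)(p_1)\,\cS_\sK(B)(p_2)$, i.e. the function $\cS_\sH(A)\otimes\cS_\sK(B)$. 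Thus $\cI\circ\cS_{\sH\otimes\sK}$ and $\cS_\sH\otimes\cS_\sK$ coincide on every simple tensor $A\otimes B$.

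Since in finite dimension every element of $\gB(\sH\otimes\sK)=\gB(\sH)\otimes\gB(\sK)$ is a finite sum of simple tensors, linearity of $\cI$ and of the three $\cS$-maps promotes the equality on generators to the identity $\cI\circ\cS_{\sH\otimes\sK}=\cS_\sH\otimes\cS_\sK$ on all of $\cF^2(\sH\otimes\sK)$, whence $\cI=(\cS_\sH\otimes\cS_\sK)\circ\cS_{\sH\otimes\sK}^{-1}$ is a composite of bijections and the proposition follows. The same computation also disposes of well-definedness: writing $\cS_{\sH\otimes\sK}^{-1}(f)=\sum_i A_i\otimes B_i$, one reads off $Seg^*f=\sum_i\cS_\sH(A_i)\otimes\cS_\sK(B_i)$, a genuine element of $\cF^2(\sH)\otimes\cF^2(\sK)$.

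I expect the only genuinely delicate point to be this well-definedness, for two reasons. First, the image of $Seg$ — the separable pure states — is a measure-zero subset of $\sP(\sH\otimes\sK)$, so the naive pull-back of an $\cL^2$-class along $Seg$ is ambiguous; this is circumvented by the fact that the isomorphism $\cF^2\cong\gB$ equips each frame function with the canonical smooth representative $p\mapsto tr(\sigma p)$, which is the representative that must be pulled back, and the computation above is carried out with it. Second, one must check that the pulled-back function does not escape the tensor product $\cF^2(\sH)\otimes\cF^2(\sK)$ and that the external-product realization of that tensor product is faithful; both become immediate once the generator identity is established, and it is precisely here that the hypotheses $\dim\sH,\dim\sK>2$ are used, since they guarantee $\cF^2\cong\gB$ on each factor.
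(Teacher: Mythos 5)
Your proof is correct and follows essentially the same route as the paper's: both hinge on the factorization $tr\bigl((A\otimes B)(p_1\otimes p_2)\bigr)=tr(Ap_1)\,tr(Bp_2)$ over Segre points together with the isomorphisms $\cF^2\simeq\gB$ on each factor and on $\sH\otimes\sK$; the paper merely packages this as an explicit inverse map $\cJ$ where you package it as the identity $\cI=(\cS_\sH\otimes\cS_\sK)\circ\cS_{\sH\otimes\sK}^{-1}$. If anything, you are more careful than the paper on the one point it asserts without argument, namely that $\cI(f)$ actually lands in $\cF^2(\sH)\otimes\cF^2(\sK)$ and that pulling back an $\cL^2$-class along the measure-zero image of $Seg$ is unambiguous once the canonical representative $p\mapsto tr(\sigma p)$ is used.
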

\begin{proof}
For any $f\in\cF^2(\sH\otimes\sK)$, its image function $\cI(f):(p_1,p_2)\mapsto f\circ Seg(p_1,p_2)$ belongs to $\cF^2(\sH)\otimes\cF^2(\sK)$. We have to show that $\cI$ is bijective.

The generic element $g$ of $\cF^2(\sH)\otimes\cF^2(\sK)$ is the function given by the finite sum:
\\
$$g:(p_1,p_2)\mapsto\sum_{i\in I} g_1^{(i)}(p_1)g_2^{(i)}(p_2),$$ 
with $g_1^{(i)}\in\cF^2(\sH)$ and  $g_2^{(i)}\in\cF^2(\sK)$ for every $i\in I$. The function $g$ can be written as: 
$$g:(p_1,p_2)\mapsto\sum_{i\in I} tr(A_1^{(i)}p_1)tr(A_2^{(i)}p_2),$$
with $A_1^{(i)}\in\gB(\sH)$ and $A_2^{(i)}\in\gB(\sK)$ for every $i\in I$. We define the action of the map $\cJ: \cF^2(\sH)\otimes\cF^2(\sK)\rightarrow \cF^2(\sH\otimes\sK)$ as: 
$$\cJ(g):\sP(\sH\otimes\sK)\rightarrow \bC$$
$$\cJ(g):p\mapsto\sum_{i\in I} tr\left(A_1^{(i)}\otimes A_2^{(i)}p\right)=tr\left(\sum_{i\in I}A_1^{(i)}\otimes A_2^{(i)}p\right).$$
\\
The direct calculation shows that $\cJ=\cI^{-1}$, so $\cI$ is a bijection. 
\end{proof}
\vspace{0.5cm}
\noindent
One can prove the above result establishes a $C^*$-algebraic isomorphism, however only the isomorphism of vector spaces is useful for us, considering the convex set $\sS(\sH\otimes\sK)\subset\cF^2(\sH\otimes\sK)$ of Liouville denisties. The hypothesis $\dim\sH,\dim\sK>2$ is mendatory becuase we exploit the isomorphisms $\cF^2(\sH)\simeq\gB(\sH)$ and $\cF^2(\sK)\simeq\gB(\sK)$.

\vspace{0.3cm}

\section{Entanglement in the geometric Hamiltonian picture}

In this section we introduce the machinery to describe quantum esntanglement of a bipartite system in terms of Liouville densities defined on the phase space given by the projective Hilbert space. As \emph{inverse-quantization scheme} for states (to obtain Liouville densities from density matrices) we consider the isomorphism of vector spaces $\cS:\gB(\sH\otimes\sK)\rightarrow \cF^2(\sH\otimes\sK)$ given by $\cS(\sigma)(p)=tr(\sigma p)$ for every $\sigma\in\gB(\sH\otimes\sK)$, we also consider the isomorphisms $\cS_\sH:\gB(\sH)\rightarrow \cF^2(\sH)$ and $\cS_\sK:\gB(\sK)\rightarrow \cF^2(\sK)$ defined for the subsystems.

Since a pure state of a bipartite system is separable if and only if it is represented by a pure tensor in $\sH\otimes\sK$, we want to investigate how product form is encoded in frame functions formalism. Henceforth we assume $\dim\sH,\dim\sK>2$ without further specifications.

The following result shows a necessary and sufficient condition on $\rho\in\cF^2(\sH\otimes\sK)$ so that $\rho=\cS(\sigma_1\otimes\sigma_2)$  with $\sigma_1\in\gB(\sH)$ and $\sigma_2\in\gB(\sK)$. In other words there is a criterion to check if a function in $\cF^2(\sH\otimes\sK)$ is associated to an operator in the product form.

\begin{proposition}\label {product form}
Let be $\rho\in\cF^2(\sH\otimes\sK)$, the operator $\sigma=\cS^{-1}(\rho)\in\gB(\sH\otimes\sK)$ is given by a product $\sigma=\sigma_1\otimes \sigma_2$ with  $\sigma_1\in\gB(\sH)$ and $\sigma_2\in\gB(\sK)$ if and only if there are $\rho_1\in\cF^2(\sH)$, $\rho_2\in\cF^2(\sK)$ such that:

\begin{equation}\label{frametensor}
(\rho\circ Seg) (p_1,p_2)=\rho_1(p_1)\rho_2(p_2)\qquad \forall (p_1,p_2)\in\sP(\sH)\times\sP(\sK), 
\end{equation}       
\\
where $Seg:\sP(\sH)\times\sP(\sK)\rightarrow \sP(\sH\otimes\sK)$ is the Segre embedding $Seg: (p_1,p_2)\mapsto p_1\otimes p_2$. Moreover the functions $\rho_1$ and $\rho_2$ satisfy $\rho_1=\cS_\sH(\sigma_1)$ and $\rho_2=\cS_\sK(\sigma_2)$. \\In this case we say that the function $\rho$ is of the {\bf product form} writing $\rho=\rho_1\diamond \rho_2$.    
\end{proposition}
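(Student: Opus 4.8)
The plan is to prove the two implications separately: the forward direction is a direct trace computation, while the reverse direction hinges on the injectivity of the pull-back isomorphism $\cI$ from Proposition \ref{isoseg}.

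For the forward implication, I would assume $\sigma=\sigma_1\otimes\sigma_2$ and evaluate $\rho\circ Seg$ at a product point $p_1\otimes p_2$. Since $\rho(p)=tr(\sigma p)$ and $(\sigma_1\otimes\sigma_2)(p_1\otimes p_2)=(\sigma_1 p_1)\otimes(\sigma_2 p_2)$, the multiplicativity of the trace on tensor products gives $(\rho\circ Seg)(p_1,p_2)=tr(\sigma_1 p_1)\,tr(\sigma_2 p_2)$. Setting $\rho_1:=\cS_\sH(\sigma_1)\in\cF^2(\sH)$ and $\rho_2:=\cS_\sK(\sigma_2)\in\cF^2(\sK)$ yields exactly the factorization (\ref{frametensor}), and the moreover clause holds by this very construction.

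For the reverse implication, suppose $(\rho\circ Seg)(p_1,p_2)=\rho_1(p_1)\rho_2(p_2)$ with $\rho_1\in\cF^2(\sH)$ and $\rho_2\in\cF^2(\sK)$. Using the isomorphisms $\cF^2(\sH)\simeq\gB(\sH)$ and $\cF^2(\sK)\simeq\gB(\sK)$ (this is where the hypothesis $\dim\sH,\dim\sK>2$ enters), I would define $\sigma_1:=\cS_\sH^{-1}(\rho_1)$ and $\sigma_2:=\cS_\sK^{-1}(\rho_2)$ and form the candidate $\sigma_1\otimes\sigma_2\in\gB(\sH\otimes\sK)$. Reusing the computation already carried out, $\cS(\sigma_1\otimes\sigma_2)\circ Seg=\rho_1\rho_2=\rho\circ Seg$, so the two frame functions $\cS(\sigma_1\otimes\sigma_2)$ and $\rho$ in $\cF^2(\sH\otimes\sK)$ have identical pull-backs under $Seg$, that is $\cI(\cS(\sigma_1\otimes\sigma_2))=\cI(\rho)$.

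The main obstacle, and the heart of the argument, is to upgrade this agreement on the Segre image to equality of $\rho$ and $\cS(\sigma_1\otimes\sigma_2)$ on all of $\sP(\sH\otimes\sK)$: the Segre image is only a (measure-zero) submanifold, so for generic functions no such conclusion would follow. Frame functions are rigid, however, and this rigidity is precisely the content of Proposition \ref{isoseg}, whose injectivity of $\cI$ forces $\cS(\sigma_1\otimes\sigma_2)=\rho$. Applying $\cS^{-1}$ then gives $\sigma=\cS^{-1}(\rho)=\sigma_1\otimes\sigma_2$, and I would close by remarking that the decomposition is determined only up to the scalar rescaling $\sigma_1\mapsto c\sigma_1,\ \sigma_2\mapsto c^{-1}\sigma_2$ (matched by $\rho_1\mapsto c\rho_1,\ \rho_2\mapsto c^{-1}\rho_2$), so the moreover clause is to be read for a compatible choice of representatives.
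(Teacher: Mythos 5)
Your proposal is correct and follows essentially the same route as the paper: the forward direction is the same trace computation, and the reverse direction rests on the bijectivity of the pull-back map $\cI$ from Proposition \ref{isoseg}, which you invoke as injectivity and the paper invokes as uniqueness of the preimage of $\rho_1\rho_2$ in $\cF^2(\sH\otimes\sK)$ --- the same fact. Your added remark on the scalar rescaling ambiguity of the factorization is a correct refinement of the ``moreover'' clause that the paper leaves implicit.
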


\begin{proof}
Let us suppose that $\rho=\cS(\sigma)$, i.e. $\rho(p)=tr(\sigma p)$, where $\sigma=\sigma_1\otimes \sigma_2$ with $\sigma_1\in\gB(\sH)$ and $\sigma_2\in\gB(\sK)$. Just calculate, for any $p_1\in\sP(\sH)$ and $p_2\in\sP(\sK)$: 

$$(\rho\circ Seg)(p_1,p_2)=\rho(p_1\otimes p_2)=tr(\sigma p_1\otimes p_2)=tr(\sigma_1p_1\otimes \sigma_2p_2)=tr(\sigma_1p_1)tr(\sigma_2p_2) $$
\\
put: $\rho_1(p_1)=tr(\sigma_1p_1)$ and $\rho_2(p_2)=tr(\sigma_2p_2)$.
Thus we proved that (\ref{frametensor}) holds \emph{if}:  

\begin{equation}\label{product}
\rho(p)=tr(\sigma_1\otimes \sigma_2 p)\qquad p\in\sP(\sH\otimes\sK),
\end{equation}
\\
now let us prove that (\ref{frametensor}) \emph{only if} (\ref{product}). The function $\hat\rho:(p_1,p_2)\mapsto\rho_1(p_1)\rho_2(p_2)$ is an element of $\cF^2(\sH)\otimes\cF^2(\sK)$.
Since  $\cF^2(\sH)\otimes\cF^2(\sK)$ and $\cF^2(\sH\otimes\sK)$ are isomorphic (Proposition \ref{isoseg}) then for any function $\hat\rho\in\cF^2(\sH)\otimes\cF^2(\sK)$ there is a unique function 
$\rho\in\cF^2(\sH\otimes\sK)$ such that $\hat\rho=\cI(\rho)=\rho\circ Seg$. Thus  the function $\rho\in\cF^2(\sH\otimes\sK)$ satisfying (\ref{frametensor}) is unique and given by (\ref{product}). 
\end{proof}
\vspace{0.3cm}
\noindent In proposition \ref{product form}, we have introduced the product $\diamond$ corresponding to tensor product between operators, i.e. $\cS(A\otimes B)=\cS_\sH(A)\diamond\cS_\sK(B)$ for every $A\in\gB(\sH)$ and $B\in\gB(\sK)$.
Since $\cS$ is linear, the vector space $\cF^2(\sH\otimes\sK)$ is the span of all $\rho_1\diamond \rho_2$. Applying the result of  Proposiiton \ref{representation} we can give an explicit definition of the $\diamond$-product.  The natural idea is representing $\sigma_1\otimes\sigma_2$ in terms of the integral introduced in Proposition \ref{representation}. Consider the re-quantization distributions $\gS_H:\sP(\sH)\rightarrow \gB(\sH)$ and $\gS_K:\sP(\sK)\rightarrow\gB(\sK)$ according to definition (\ref{qd}):
\begin{proposition}
Let be $\rho_1\in\cF^2(\sH)$ and $\rho_2\in\cF^2(\sK)$. The function $\rho\in\cF^2(\sH\otimes\sK)$ such that $\cS^{-1}(\rho)=\cS^{-1}_\sH(\rho_1)\otimes\cS^{-1}_\sK(\rho_2)$ is given by:

\begin{equation}\label{frametp}
\rho(p)=\int_{\sP(\sH)\times\sP(\sK)} \rho_1(p_1)\rho_2(p_2) tr\left[p\,\, \gS_H(p_1)\otimes
\gS_\sK(p_2)\right]d\nu_\sH(p_1) d\nu_\sK(p_2)=:(\rho_1\diamond\rho_2)(p),
\end{equation}
\\
where $\nu_\sH$ and $\nu_\sK$ are the Liouville measures respectively defined on the manifolds $\sP(\sH)$ and $\sP(\sK)$. 
\end{proposition}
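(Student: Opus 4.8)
The plan is to compute $\cS(\sigma_1\otimes\sigma_2)$ directly, where $\sigma_1=\cS_\sH^{-1}(\rho_1)$ and $\sigma_2=\cS_\sK^{-1}(\rho_2)$, and to recognise the resulting integral as the right-hand side of \nref{frametp}. By definition of $\cS$ we have $\rho(p)=tr((\sigma_1\otimes\sigma_2)p)$, so it suffices to produce an integral representation of the operator $\sigma_1\otimes\sigma_2$ and then take $tr(\,\cdot\,p)$.

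First I would invoke Theorem \ref{representation} in its smearing form \nref{smearing} on each factor, writing $\sigma_1=\int_{\sP(\sH)}\rho_1(p_1)\,\gS_\sH(p_1)\,d\nu_\sH(p_1)$ and $\sigma_2=\int_{\sP(\sK)}\rho_2(p_2)\,\gS_\sK(p_2)\,d\nu_\sK(p_2)$, with $\gS_\sH(p_1)=(\dim\sH+1)p_1-\bI_\sH$ and $\gS_\sK(p_2)=(\dim\sK+1)p_2-\bI_\sK$. Both integrals are vector-valued integrals in the finite-dimensional spaces $\gB(\sH)$ and $\gB(\sK)$; since $\sP(\sH)$ and $\sP(\sK)$ are compact, $\gS_\sH,\gS_\sK$ are bounded and continuous, and $\rho_1,\rho_2\in\cL^2\subset\cL^1$, so they converge absolutely.

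The key step is then to push the tensor product inside the two integrals to obtain the double integral
\begin{equation}
\sigma_1\otimes\sigma_2=\int_{\sP(\sH)\times\sP(\sK)}\rho_1(p_1)\rho_2(p_2)\,\gS_\sH(p_1)\otimes\gS_\sK(p_2)\,d\nu_\sH(p_1)\,d\nu_\sK(p_2).
\end{equation}
This is the only place where anything requires justification: it is the identity $(\int f)\otimes(\int g)=\int\!\!\int f\otimes g$ for the bounded bilinear map $\otimes:\gB(\sH)\times\gB(\sK)\to\gB(\sH\otimes\sK)$, which commutes with the (finite-dimensional, absolutely convergent) integrals; equivalently, one expands each factor in a fixed operator basis, reducing the statement to the elementary Fubini theorem for scalar integrals. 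Finally I would apply $\cS$, i.e. multiply by $p$ and take the trace, exchanging the trace with the integral by linearity (again licit because everything lives in finite dimension) and using cyclicity $tr[(\gS_\sH(p_1)\otimes\gS_\sK(p_2))p]=tr[p\,\gS_\sH(p_1)\otimes\gS_\sK(p_2)]$ to match the form displayed in \nref{frametp}.

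It remains to note that the $\rho$ so produced is indeed the function asserted by the proposition: since $\sigma_1\otimes\sigma_2\in\gB(\sH\otimes\sK)$ and $\cS:\gB(\sH\otimes\sK)\to\cF^2(\sH\otimes\sK)$ is an isomorphism, $\rho=\cS(\sigma_1\otimes\sigma_2)$ automatically belongs to $\cF^2(\sH\otimes\sK)$ and is the unique frame function with $\cS^{-1}(\rho)=\sigma_1\otimes\sigma_2$. I expect the only genuine (and mild) obstacle to be the bookkeeping for interchanging $\otimes$ and the trace with the integrals; in finite dimension these are routine, but they are the steps that actually carry the argument.
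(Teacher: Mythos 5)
Your proposal is correct and follows exactly the route the paper takes: its own proof is a two-line sketch that says to represent $\cS^{-1}_\sH(\rho_1)\otimes\cS^{-1}_\sK(\rho_2)$ via the smearing formula \nref{smearing} on each factor and then compute $\cS$ of the result, which is precisely what you carry out (with the Fubini/bilinearity bookkeeping made explicit). No gaps; you have simply filled in the details the paper leaves to the reader.
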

\begin{proof}
\vspace{0.2cm}
The thesis is a direct result of these two steps: representation of the operator $\cS^{-1}_\sH(\rho_1)\otimes\cS^{-1}_\sK(\rho_2)$ with the integral formula (\ref{smearing}) and the calculation of $\rho=\cS(\cS^{-1}_{\sH}(\rho_1)\otimes\cS^{-1}_\sK(\rho_2))$.
\end{proof}
\vspace{0.5cm}
\noindent 

The product function $\rho_1\diamond\rho_2$ is given by a smearing on the cartesian product $\sP(\sH)\times\sP(\sK)$ with a kernel $\gT:\sP(\sH\otimes\sK)\times\sP(\sH)\times\sP(\sK)\rightarrow \bC$ which does not depend on $\rho_1$ and $\rho_2$ but only on the quantization distributions on $\sP(\sH)$ and $\sP(\sK)$ given by $\gT(p,p_1,p_2)=tr[p\,\gS_H(p_1)\otimes \gS_K(p_2)]$, thus:
\
\beq
(\rho_1\diamond\rho_2)(p)=\int_{\sP(\sH)\times\sP(\sK)}\rho_1(p_1)\rho_2(p_2) \gT(p,p_1,p_2)d\nu_\sH(p_1) d\nu_\sK(p_2).
\eeq
\\
With a very compact notation, we can write: $\rho_1\diamond\rho_2=\int \rho_1\otimes\rho_2 \,\gT\, d\nu_\sH\, d\nu_\sK$.

 We can define an anologous notion of partial trace for functions in $\cF^2(\sH\otimes\sK)$ that are not of product form $\rho_1\diamond\rho_2$. In Definition \ref{pt}, partial trace of $\sigma\in\gB(\sH\otimes\sK)$, denoted as $tr_\sK(\sigma)$, is defined as the unique operator such that 

$$tr(tr_\sK(\sigma)A)=tr(\sigma A\otimes \mathbb I_\sK)\qquad \forall A\in\gB(\sH).$$
\\
A slightly alternative definition is the following: the {\bf partial trace} with respect to $\sK$ is the injective map $tr_\sK:\gB(\sH\otimes\sK)\rightarrow \gB(\sH)$ given by:

\begin{equation}
tr_\sK(\sigma\otimes \sigma'):=\sigma\,\,tr(\sigma')\qquad \forall \sigma\in\gB(\sH), \forall \sigma'\in\gB(\sK),
\end{equation}
\\
and extended to whole $\gB(\sH\otimes\sK)$ by linearity. In order to apply the tool of partial trace in the approach of this paper, we define a map, we can say \emph{partial integral}, using the trace-integral formula already introduced in (\ref{trace-integral}).

\begin{definition} 
Let $\cF^2 (\sH\otimes\sK)\ni \rho\mapsto \rho_\sK \in\cF^2(\sH)$ be a map defined on product elements by:
\begin{equation}\label{partint}
(\rho_1\diamond \rho_2)_\sK(p_1):=\rho_1(p_1)\int_{\sP(\sK)} \rho_2(p_2) d\nu_\sK(p_2),
\end{equation}
for any pair $\rho_1\in\cF^2(\sH)$ and $\rho_2\in\cF^2(\sK)$ and extended by linearity to whole $\cF^2 (\sH\otimes\sK)$. We call the map $\rho\mapsto \rho_\sK$ the {\bf partial integral} w.r.t. $\sP(\sK)$.
\end{definition}
In case of quantum states, the partial integral can be interpreted as the integration of a Liouville density describing a state of a composite system w.r.t. to a marginal measure obtaining a \emph{marginal} probability density.  
\begin{proposition}\label{sint}
 If  $\rho\mapsto \rho_\sK$ is the partial integral on $\cF^2 (\sH\otimes\sK)$ w.r.t. $\sP(\sK)$, then:
\\
{\bf (a)} $\rho_\sK$ has this form: 
\begin{equation}\label{pi}
\rho_\sK(p_1)=\int_{\sP(\sK)} (\rho \circ Seg)(p_1,p_2) d\nu_\sK(p_2)\qquad\forall \rho\in\cF^2(\sH\otimes\sK).
\end{equation}
\\
{\bf (b)} The following relation holds:

\begin{equation}\label{parteq}
tr_K\left[\cS^{-1}(\rho) \right]=\cS_\sH^{-1}(\rho_\sK)\qquad \forall\rho\in\cF^2(\sH\otimes\sK).
\end{equation}
\\
Analogous statements for $\rho\mapsto\rho_\sH$. 
\end{proposition}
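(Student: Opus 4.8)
The plan is to reduce both claims to product elements $\rho=\rho_1\diamond\rho_2$ and then extend by linearity, exploiting that (by Proposition \ref{isoseg}) $\cF^2(\sH\otimes\sK)$ is isomorphic through the Segre pull-back $\cI$ to the algebraic tensor product $\cF^2(\sH)\otimes\cF^2(\sK)$, which is spanned by such products. Under $\cI$ the product function $\rho_1\diamond\rho_2$ is sent, via Proposition \ref{product form}, to the pure tensor $(p_1,p_2)\mapsto\rho_1(p_1)\rho_2(p_2)$. This identification also settles the well-posedness of the linear extension defining the partial integral in (\ref{partint}): under $\cI$ the partial integral is realized as $\mathrm{id}\otimes L$, where $L(\rho_2):=\int_{\sP(\sK)}\rho_2\,d\nu_\sK$, which is a genuine tensor product of linear maps and hence carries no ambiguity across the relations among the $\rho_1\diamond\rho_2$.

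For part (a), first I would note that the right-hand side of (\ref{pi}), namely the assignment $\rho\mapsto\int_{\sP(\sK)}(\rho\circ Seg)(\,\cdot\,,p_2)\,d\nu_\sK(p_2)$, is linear in $\rho$, since pull-back by $Seg$ and integration are both linear. It therefore suffices to verify that it agrees with the partial integral on a single product element. Taking $\rho=\rho_1\diamond\rho_2$, Proposition \ref{product form} gives $(\rho\circ Seg)(p_1,p_2)=\rho_1(p_1)\rho_2(p_2)$, so integrating in $p_2$ factors out $\rho_1(p_1)$ and reproduces exactly the defining formula (\ref{partint}) for $(\rho_1\diamond\rho_2)_\sK$. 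Two linear maps agreeing on a spanning set coincide, giving (\ref{pi}) for all $\rho$.

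For part (b), I would again test the identity on $\rho=\rho_1\diamond\rho_2$ and then invoke linearity of $tr_\sK$, $\cS^{-1}$, $\cS_\sH^{-1}$ and of the partial integral. Setting $\sigma_1=\cS_\sH^{-1}(\rho_1)$ and $\sigma_2=\cS_\sK^{-1}(\rho_2)$, the defining property of $\diamond$ gives $\cS^{-1}(\rho_1\diamond\rho_2)=\sigma_1\otimes\sigma_2$, whence $tr_\sK[\cS^{-1}(\rho)]=tr_\sK(\sigma_1\otimes\sigma_2)=\sigma_1\,tr(\sigma_2)$. On the other side, the trace-integral formula (\ref{trace-integral}) applied on $\sK$ yields $\int_{\sP(\sK)}\rho_2\,d\nu_\sK=tr(\sigma_2)$, so by part (a) one has $\rho_\sK=tr(\sigma_2)\,\rho_1$, and applying $\cS_\sH^{-1}$ gives $\cS_\sH^{-1}(\rho_\sK)=tr(\sigma_2)\,\sigma_1$. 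The two sides match, and linearity closes the argument; the analogous statement for $\rho\mapsto\rho_\sH$ follows by exchanging the roles of $\sH$ and $\sK$.

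The individual computations are routine; the only point genuinely requiring care is the consistency of the linear extension that defines the partial integral, because the product functions $\rho_1\diamond\rho_2$ are not linearly independent. I expect this to be the main obstacle, and it is dissolved by the isomorphism $\cI$ of Proposition \ref{isoseg}, which presents $\cF^2(\sH\otimes\sK)$ as an honest tensor product and thereby exhibits the partial integral as a well-defined $\mathrm{id}\otimes L$ map.
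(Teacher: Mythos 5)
Your proof is correct and follows essentially the same route as the paper's: decompose $\rho$ into a finite sum of $\diamond$-products, verify both identities on product elements using Proposition \ref{product form} and the trace-integral formula, and conclude by linearity. Your additional observation that the well-posedness of the linear extension in (\ref{partint}) is secured by the isomorphism $\cI$ of Proposition \ref{isoseg}, which exhibits the partial integral as $\mathrm{id}\otimes L$, is a point the paper leaves implicit.
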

\begin{proof}
The generic element $\rho\in\cF^2(\sH\otimes\sK)$ can be written as a finite sum:
$$\rho=\sum_{i\in I}\rho_1^{(i)}\diamond\rho_2^{(i)},$$
with $\rho_1^{(i)}\in\cF^2(\sH)$ and  $\rho_2^{(i)}\in\cF^2(\sK)$ for every $i\in I$. Calculating the partial integral as in (\ref{partint}):
$$\rho_\sK=\sum_{i\in I}\rho_1^{(i)}\int_{\sP(\sK)}\rho_2^{(i)}d\nu_\sK.$$
\\
Let us show that above expression is equivalent to (\ref{pi}): By definition of Segre embedding, we have $(\rho\circ Seg)(p_1,p_2)=\rho(p_1\otimes p_2)$, $\forall (p_1,p_2)\in \sP(\sH)\times\sP(\sK)$, in particular:

$$(\rho\circ Seg) (p_1,p_2)=\sum_{i\in I}  \rho_1^{(i)}(p_1)\rho_2^{(i)}(p_2).$$
\\
Integrating w.r.t. $\nu_\sK$:
$$\int_{\sP(\sK)}\rho\circ Seg \,\,d\nu_\sK=\sum_{i\in I}  \rho_1^{(i)}\int_{\sP(\sK)}\rho_2^{(i)}\,\,d\nu_\sK=\rho_\sK.$$
\\
Let us prove the statement (b). By linearity and Proposition \ref{product form}:

$$\cS^{-1}(\rho)=\sum_{i\in I}\,\,\cS_\sH^{-1}(\rho_1^{(i)})\otimes\cS_\sK^{-1}(\rho_2^{(i)}),$$ 
\\
applying the partial trace $tr_\sK$:
$$tr_\sK\left[\cS^{-1}(\rho)\right]=\sum_{i\in I}\,\,\cS_\sH^{-1}(\rho_1^{(i)})\int_{\sP(\sK)}\rho_2^{(i)}\,d\nu_\sK=\cS_\sH^{-1}\left(\sum_{i\in I}\,\,\rho_1^{(i)}\int_{\sP(\sK)}\rho_2^{(i)}\,d\nu_\sK\right)=\cS_\sH^{-1}(\rho_\sK).$$
\end{proof}
\vspace{0.5cm}
\noindent
The statement of this proposition can be used to prove the next result showing how integrals of frame functions over $\sP(\sH\otimes\sK)$  can be computed.

\begin{theorem} \label{teorema}
Let $\sH$ and $\sK$ be finite-dimensional Hilbert spaces with $\dim\sH,\dim\sK>2$. Consider projective spaces $\sP(\sH)$, $\sP(\sK)$, $\sP(\sH\otimes\sK)$, each equipped with the discussed almost complex K\"ahler structure. $\nu_\sH$, $\nu_\sK$ and $\nu$ denotes the respective Liouville measures. $\cF^2(\sH\otimes\sK)$ denotes the vector space of frame functions in $\cL^2(\gP(\sH\otimes\sK), \nu)$.
\\
The following fact holds for any $\rho\in\cF^2(\sH\otimes\sK)$:
\begin{equation}
\int_{\sP(\sH)\times\sP(\sK)} \rho\circ Seg\,\,\, d\nu_\sH d\nu_\sK=\int_{\sP(\sH\otimes\sK)} \rho\,\, d\nu,
\end{equation} 
where $d\nu_\sH d\nu_\sK$ is the standard product measure on $\sP(\sH)\times\sP(\sK)$.
\end{theorem}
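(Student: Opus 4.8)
The plan is to reduce the identity to product elements by linearity and then to recognize each side as a trace, routing both through the trace-integral formula (\ref{trace-integral}). First I would observe that, by Proposition \ref{isoseg}, every $\rho\in\cF^2(\sH\otimes\sK)$ admits a finite decomposition $\rho=\sum_{i\in I}\rho_1^{(i)}\diamond\rho_2^{(i)}$ with $\rho_1^{(i)}\in\cF^2(\sH)$ and $\rho_2^{(i)}\in\cF^2(\sK)$. Since both members of the claimed equality are linear in $\rho$ (the integrals are linear and $Seg$ is a fixed map independent of $\rho$), it is enough to prove the statement for a single product element $\rho=\rho_1\diamond\rho_2$.

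For the left-hand side I would invoke Proposition \ref{product form}, which gives the pointwise factorization $((\rho_1\diamond\rho_2)\circ Seg)(p_1,p_2)=\rho_1(p_1)\rho_2(p_2)$. As $d\nu_\sH d\nu_\sK$ is the product measure, Fubini's theorem splits the integral:
$$\int_{\sP(\sH)\times\sP(\sK)}(\rho_1\diamond\rho_2)\circ Seg\,\,d\nu_\sH d\nu_\sK=\left(\int_{\sP(\sH)}\rho_1\,d\nu_\sH\right)\left(\int_{\sP(\sK)}\rho_2\,d\nu_\sK\right).$$
Applying the trace-integral formula (\ref{trace-integral}) on each factor (for $\sH$ and for $\sK$ respectively) rewrites this as $tr(\sigma_1)\,tr(\sigma_2)$, where $\sigma_1=\cS_\sH^{-1}(\rho_1)$ and $\sigma_2=\cS_\sK^{-1}(\rho_2)$.

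For the right-hand side I would apply the trace-integral formula (\ref{trace-integral}) directly on $\sP(\sH\otimes\sK)$, obtaining $\int_{\sP(\sH\otimes\sK)}(\rho_1\diamond\rho_2)\,d\nu=tr(\cS^{-1}(\rho_1\diamond\rho_2))$. By the very definition of the $\diamond$-product one has $\cS^{-1}(\rho_1\diamond\rho_2)=\sigma_1\otimes\sigma_2$, so the right-hand side equals $tr(\sigma_1\otimes\sigma_2)$. The two expressions then agree by multiplicativity of the trace under tensor products, $tr(\sigma_1\otimes\sigma_2)=tr(\sigma_1)\,tr(\sigma_2)$, which closes the product case and, by linearity, the general one.

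I expect no serious obstacle: the argument is essentially bookkeeping that funnels both integrals into the same trace expression. The only point demanding a little care is the reduction to product form — one must confirm that the finite decomposition furnished by Proposition \ref{isoseg} is legitimate and that linearity applies to both members at once, which it does precisely because $Seg$ and all three Liouville measures are fixed independently of $\rho$.
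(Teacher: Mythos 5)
Your proof is correct, but it follows a genuinely different route from the paper's. The paper never decomposes $\rho$ into product elements inside the proof of Theorem \ref{teorema}: it applies the trace-integral formula (\ref{trace-integral}) on $\sP(\sH\otimes\sK)$ to write $\int_{\sP(\sH\otimes\sK)}\rho\,d\nu = tr[\cS^{-1}(\rho)]$, then uses $tr = tr\circ tr_\sK$ together with Proposition \ref{sint}: part (b) turns $tr\left(tr_\sK[\cS^{-1}(\rho)]\right)$ into $\int_{\sP(\sH)}\rho_\sK\,d\nu_\sH$, and part (a) identifies $\rho_\sK$ with the fibrewise integral $\int_{\sP(\sK)}(\rho\circ Seg)(\cdot,p_2)\,d\nu_\sK(p_2)$, yielding the iterated integral over $\sP(\sH)\times\sP(\sK)$. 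You bypass Proposition \ref{sint} entirely, reducing instead to the product case via $\rho=\sum_i\rho_1^{(i)}\diamond\rho_2^{(i)}$ and closing with Fubini and $tr(\sigma_1\otimes\sigma_2)=tr(\sigma_1)\,tr(\sigma_2)$. Since Proposition \ref{sint} is itself proved by exactly that product decomposition, the two arguments rest on the same underlying facts; yours is the more self-contained and elementary unwinding, while the paper's is shorter on the page because it leans on the partial-integral machinery it has just built (and which it wants to showcase anyway). Every step you take is legitimate: the finite decomposition exists because $\cF^2(\sH\otimes\sK)$ is the span of the $\rho_1\diamond\rho_2$ (as the paper notes after Proposition \ref{product form}), both sides of the identity are linear in $\rho$, and Fubini applies without fuss since frame functions are bounded on compact projective spaces carrying finite Liouville measures.
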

\begin{proof}
Let  $\cS:\gB(\sH\otimes\sK)\rightarrow \cF^2(\sH\otimes\sK)$ be the isomorphism defined as $\cS(\sigma)=\rho$ such that $\rho(p)=tr(\sigma p)$ for every $p\in\sP(\sH\otimes\sK)$. Trace integral formula (\ref{trace-integral}) holds:
$$\int_{\sP(\sH\otimes\sK)} \rho\,\, d\nu=tr\left[\cS^{-1}(\rho)\right].$$
Using statement (b) of Proposition \ref{sint}:
\begin{equation}\label{trtr}
tr\left(tr_\sK\left[\cS^{-1}(\rho)\right]\right)=\int_{\sP(\sH)} \rho_\sK\,\,d\nu_\sH
\end{equation}
Since $tr\left(tr_\sK\left[\cS^{-1}(\rho)\right]\right)=tr\left[\cS^{-1}(\rho)\right]$ by definition of partial trace, the theorem is proved by statement (a) of Proposition \ref{sint}.
\end{proof} 
\vspace{0.5cm}
\noindent
Let us recall the set of Liouville denisties is denoted by $\sS(\sH)$ as in (\ref{densities}) and the subset of densities representing pure states is denoted as $\sS_p(\sH)$. In the following there is the definition of separable and entangled states in terms of Liouville densities.

\begin{definition}\label{sep-ent}
Let $\rho\in\sS_p(\sH\otimes\sK)$ be a Liouville density representing a pure state of the composite system described on $\sP(\sH\otimes \sK)$. $\rho$ is said to be a {\bf separable pure state} if there are $\rho_1\in\sS_p(\sH)$ and $\rho_2\in\sS_p(\sK)$ such that $\rho=\rho_1\diamond \rho_2$.
In other words, $\rho$ is said to be a {\bf separable pure state} if:
\begin{equation}\label{separable}
 (\rho\circ Seg)(p_1,p_2)=\rho_\sK(p_1)\rho_\sH(p_2)  \qquad\forall (p_1,p_2)\in\sP(\sH)\times\sP(\sK) 
\end{equation}
where $\rho_\sH$ and $\rho_\sK$ are the partials integrals of $\rho$ w.r.t. $\sP(\sH)$ and $\sP(\sK)$ respectively.
We denote the set of separable pure states as $\sS_p^{sep}(\sH\otimes\sK)$.

The elements of the convex hull $\sS^{sep}(\sH\otimes\sK):=conv[\sS_p^{sep}(\sH\otimes\sK)]$ are called {\bf separable mixed states}. Finally,  the states belonging to $\cE(\sH\otimes\sK):=\sS(\sH\otimes\sK)\setminus\sS^{sep}(\sH\otimes\sK)$ are called {\bf entangled states}.
\end{definition}

Definition \ref{sep-ent} suggests that the measure of the subset in $\sP(\sH)\times\sP(\sK)$ where the equation $\rho\circ Seg=\rho_\sK\rho_\sH$ fails can be considered an entanglement measure of the state $\rho$.   

From the physical point of view this idea of entanglement measure does not take into account the distinguishability of entangled states. Below the proposal of an entanglement measure based on a $\cL^2$-distance. 
\\
Let us introduce a real map $E:\sS(\sH\otimes\sK)\rightarrow \bR$ defined by:
\begin{equation}\label{entmeasure}
E(\rho):=\left(\int_{\sP(\sH)\times\sP(\sK)} |F_\rho|^2d\nu_\sH d\nu_\sK\right)^{\frac{1}{2}}\qquad\forall\rho\in\sS_p(\sH\otimes\sK),
\end{equation}
where 
\begin{equation}\label{F}
F_\rho(p_1,p_2):= (\rho\circ Seg)(p_1,p_2)-\rho_\sK(p_1)\rho_\sH(p_2),
\end{equation}
\\
and the extension of $E$ to $\sS(\sH\otimes\sK)$ is given by the convex roof:

\begin{equation}\label{convexext}
E(\rho):= \inf_{\rho=\sum \lambda_i\rho_i}\sum_i\lambda_i E(\rho_i)\qquad\forall\rho\in\sS(\sH\otimes\sK),   
\end{equation}
\\
where the infimum is taken on all the possible convex combinations of $\rho$ in terms of pure states $\rho_i\in\sS_p(\sH\otimes\sK)$ and the coefficients $\lambda_i$ are the statistical weights of the mixture. 
Since $F_\rho\in\cL^2(\sP(\sH)\times\sP(\sK),d\nu_\sH d\nu_\sK)$ for any $\rho\in\sS_p(\sH\otimes\sK)$ by definition (\ref{F}), $E(\rho)$ is its $\cL^2$-norm.

\noindent
Another natural idea to define an entanglement measure seems to be given by the calculation of the integral of $F_\rho$ itself on $\sP(\sH)\times\sP(\sK)$, however it is always zero. In fact, by definition of $F_\rho$ and theorem \ref{teorema}, we have:
$$\int_{\sP(\sH)\times\sP(\sK)} F_\rho d\nu_\sH d\nu_\sK\ =\int_{\sP(\sH\otimes\sK)}\rho\circ Seg \,d\nu-\int_{\sP(\sH)}\rho_\sK\, d\nu_\sH\int_{\sP(\sK)} \rho_\sH \, d\nu_\sK=0,$$
\\
since $\rho,\rho_\sK,\rho_\sH$ are each normalized to 1 w.r.t. appropriate measures. 
\\
Let us recall the following technical lemma \cite{GCM2011} about the extension of functions from the extremal points to the convex hull, its statement is a convenient tool to check if the map $E$ is a good entanglement measure.

\begin{lemma}\label{lemma}
Let $X$ be the set of extremal points of a convex set $K$ in a finite dimensional vector space. Let $X_0$ be a compact subset of $X$ and $K_0=conv(X_0)$ its convex hull.

For any non-negative continuous function $E:X\rightarrow \bR^+$ which vanishes exactly on $X_0$, its convex extension, defined as in (\ref{convexext}), is convex on $K$ and vanishes exactly on $K_0$. Moreover, if $E$ is invariant under unitary transformations then its convex extension is so.
\end{lemma}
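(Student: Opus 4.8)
The plan is to establish the three assertions separately—convexity of the convex extension, the characterization of its zero set, and inheritance of unitary invariance—the main inputs being the finite dimension of the ambient space, which supplies Carath\'eodory's theorem (bounding the number of extremal points needed in a decomposition), and the compactness used to attain the defining infimum. I shall write $\bar E$ for the convex extension defined in (\ref{convexext}), the infimum being taken over all finite convex decompositions $v=\sum_i\lambda_i x_i$ of a point $v\in K$ into extremal points $x_i\in X$. A preliminary remark is that $\bar E$ genuinely extends $E$: an extremal point admits only the trivial decomposition into extremal points, so $\bar E|_X=E$, which is why the statement may speak of a single function $E$ throughout. Recall that in the intended application $X=\sS_p(\sH\otimes\sK)$ is compact (a continuous image of the projective space), so I may and do assume $X$ compact.

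Convexity is immediate from the definition. Given $v,w\in K$, $t\in[0,1]$ and $\epsilon>0$, I pick near-optimal decompositions $v=\sum_i\lambda_i x_i$ and $w=\sum_j\mu_j y_j$ with $\sum_i\lambda_i E(x_i)\le\bar E(v)+\epsilon$ and $\sum_j\mu_j E(y_j)\le\bar E(w)+\epsilon$. Then $tv+(1-t)w=\sum_i t\lambda_i x_i+\sum_j(1-t)\mu_j y_j$ is again a convex decomposition into extremal points, so by definition $\bar E(tv+(1-t)w)\le t(\bar E(v)+\epsilon)+(1-t)(\bar E(w)+\epsilon)$; letting $\epsilon\to0$ gives convexity.

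For the zero set, one inclusion is trivial: if $v\in K_0=\mathrm{conv}(X_0)$ then $v=\sum_i\lambda_i x_i$ with $x_i\in X_0\subset X$, and since $E$ vanishes on $X_0$ this particular decomposition contributes $0$ to the infimum, forcing $\bar E(v)=0$ (as $\bar E\ge0$). The converse is the main obstacle, and is where compactness enters. Writing $d$ for the dimension of the ambient vector space, Carath\'eodory's theorem lets me restrict the infimum to decompositions with at most $d+1$ terms, so it ranges over the set $\{(\lambda,x)\in\Delta_{d+1}\times X^{d+1}:\sum_i\lambda_i x_i=v\}$, which is compact because $X$ is compact and the constraint is closed. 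The continuous functional $(\lambda,x)\mapsto\sum_i\lambda_i E(x_i)$ therefore attains its minimum $\bar E(v)$ at some $(\lambda^\ast,x^\ast)$. If $\bar E(v)=0$ then $\sum_i\lambda_i^\ast E(x_i^\ast)=0$ with all summands non-negative, whence $E(x_i^\ast)=0$, i.e. $x_i^\ast\in X_0$, for every index with $\lambda_i^\ast>0$; therefore $v=\sum_{\lambda_i^\ast>0}\lambda_i^\ast x_i^\ast\in\mathrm{conv}(X_0)=K_0$. This shows $\bar E$ vanishes exactly on $K_0$.

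Finally, unitary invariance of the extension follows from that of $E$ together with the linearity of the unitary action. The action $v\mapsto U\cdot v$ is linear and carries extremal points to extremal points, hence induces a bijection between convex decompositions of $v$ and of $U\cdot v$ via $x_i\mapsto U\cdot x_i$. Since $E(U\cdot x_i)=E(x_i)$ by hypothesis, the corresponding sums $\sum_i\lambda_i E(x_i)$ are unchanged, so the two infima coincide and $\bar E(U\cdot v)=\bar E(v)$.
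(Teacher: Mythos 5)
The paper does not actually prove this lemma: it is recalled verbatim from \cite{GCM2011}, so there is no in-paper argument to compare yours against. Judged on its own, your proof follows the standard convex-roof route (Carath\'eodory to bound the number of terms, compactness to attain the infimum, then read off the zero set), and the convexity and unitary-invariance parts are fine. You are also right to flag that the statement as printed only assumes $X_0$ compact, whereas the attainment argument needs $X$ itself compact (or at least $E$ bounded away from zero off neighbourhoods of $X_0$); in the intended application $X=\sS_p(\sH\otimes\sK)$ is a continuous image of a projective space, so this is harmless, but it is genuinely an extra hypothesis.

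There is one step that does not work as written: ``Carath\'eodory's theorem lets me restrict the infimum to decompositions with at most $d+1$ terms.'' Restricting the infimum to a smaller family of decompositions can only increase it, so you must show the restriction does not change the value; plain Carath\'eodory in $\bR^d$ gives a sub-decomposition of $v$ into at most $d+1$ extreme points but says nothing about the associated value $\sum_i\lambda_i E(x_i)$, which may well go up. The correct reduction applies Carath\'eodory in the graph space: the point $\bigl(v,\sum_i\lambda_i E(x_i)\bigr)$ lies in the convex hull of $\{(x,E(x)):x\in X\}\subset\bR^{d+1}$, hence is a convex combination of at most $d+2$ such points, yielding a decomposition of $v$ with at most $d+2$ terms and \emph{the same} value. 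With the bound $d+2$ in place of $d+1$ your compactness-and-attainment argument goes through unchanged, so the gap is local and easily repaired --- but as stated the implication $\bar E(v)=0\Rightarrow v\in K_0$ is not yet established, since the vanishing of the unrestricted infimum does not a priori force the vanishing of the restricted one.
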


\noindent
In quantum information theory an axiomatic apporach can be adopted to find good candidates of  entanglement measures (e.g. \cite{Keyl}, \cite{Plenio}), for instance requiring that the candidate function assigns to each quantum state of a bipartite system a positive real number and it vanishes on separable states. Another requirement is the invariance of the entanglement measure w.r.t. local unitary transformations. The entanglement measure should be a convex function beacuse entanglement cannot be generated by mixing two states, moreover it should be a continuous function for this physical reason: A small perturbation of a state must correspond to a small change of entanglement.     
The following proposition shows that $E$ satisfies a list of properties of a good entanglement measure.
\begin{proposition}\label{}
The map $E:\sS(\sH\otimes\sK)\ni\rho\mapsto E(\rho)$  satisfies the following properties:
\\
{\bf i)}\,\,\,\,\, $E(\rho)\in\bR^+$ for every  $\rho\in\sS(\sH\otimes\sK)$;
\\
{\bf ii)} \,\,\,$E(\rho)=0$ if and only if $\rho$ is separable;
\\
{\bf iii)} \,$E$ is invariant under the action of the unitary group;
\\
{\bf iv)}\, $E$ is a convex function;
\\
{\bf v)} \,\,$E$ is continuous w.r.t. the uniform norm topology.
\\

\end{proposition}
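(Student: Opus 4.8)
The plan is to prove each property first on the pure states $\sS_p(\sH\otimes\sK)$, where $E$ is given directly by the $\cL^2$-norm (\ref{entmeasure})--(\ref{F}), and then to transfer the statements to mixed states through the convex-roof extension (\ref{convexext}), invoking Lemma \ref{lemma} wherever possible; the continuity claim (v) will require a separate argument, since the Lemma does not address it.

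Property (i) is immediate: on a pure state $E(\rho)=\p F_\rho\p_2\ge 0$, and the infimum in (\ref{convexext}) of sums of non-negative numbers is non-negative. For (ii), on a pure state $E(\rho)=0$ is equivalent to $F_\rho=0$ in $\cL^2(\sP(\sH)\times\sP(\sK))$; since $\rho\circ Seg$ and the marginals $\rho_\sH,\rho_\sK$ are continuous functions of their arguments, this forces $F_\rho\equiv 0$ pointwise, which is exactly the separability condition (\ref{separable}) of Definition \ref{sep-ent}. Hence $E$ vanishes on $X_0:=\sS_p^{sep}(\sH\otimes\sK)$ and nowhere else among the pure states. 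To invoke Lemma \ref{lemma} I take $K=\sS(\sH\otimes\sK)$, whose extreme points are the pure states $X=\sS_p(\sH\otimes\sK)$, and I note that $X_0$ is compact, being the image of the compact product $\sP(\sH)\times\sP(\sK)$ under the continuous maps $Seg$ and $\cS$. The Lemma then gives that the convex extension vanishes exactly on $conv(X_0)=\sS^{sep}(\sH\otimes\sK)$, which is (ii) for mixed states, and it gives convexity (iv) at the same time.

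For (iii) I check invariance on pure states under local unitaries $U=U_1\otimes U_2$ (the physically pertinent transformations for an entanglement measure). Writing $\rho\mapsto\rho^U$ for the transformed density, a short computation using $U^*(p_1\otimes p_2)U=(U_1^*p_1U_1)\otimes(U_2^*p_2U_2)$ shows $F_{\rho^U}(p_1,p_2)=F_\rho(U_1^*p_1U_1,U_2^*p_2U_2)$; since $U_1,U_2$ act as measure-preserving diffeomorphisms of $\sP(\sH),\sP(\sK)$ (unitaries preserve the Fubini--Study symplectic structure, hence $\nu_\sH,\nu_\sK$), the change of variables leaves $\p F_{\rho^U}\p_2=\p F_\rho\p_2$. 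Thus $E$ is invariant and continuous on $X$, and the last assertion of Lemma \ref{lemma} propagates the invariance to the convex extension.

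The genuine obstacle is (v), continuity of the convex-roof extension, which Lemma \ref{lemma} does not supply and which is not automatic, since a finite convex function on a compact convex body may fail to be continuous at boundary points. My plan is as follows. First, $E$ is continuous (indeed Lipschitz) on the compact set $X$, because $\rho\mapsto F_\rho$ depends continuously on $\rho$ (equivalently on the operator $\cS^{-1}(\rho)$), and since $\cF^2(\sH\otimes\sK)$ is finite-dimensional all norms are equivalent and the uniform-norm topology is the pertinent one. Second, by Carath\'eodory's theorem the infimum in (\ref{convexext}) may be restricted to decompositions into at most $\dim K+1$ pure states; compactness of $X$ and of the weight simplex, together with continuity of $E$ on $X$, then ensures the infimum is attained. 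Lower semicontinuity of the extension follows by choosing a sequence $\rho^{(k)}\to\rho$ realising the lower limit, extracting a convergent subsequence of its optimal decompositions, and using continuity of $E$ on $X$ to recognise the limit as an admissible decomposition of $\rho$. Since a finite convex function is automatically continuous on the relative interior of $\sS(\sH\otimes\sK)$, the only remaining point is upper semicontinuity at relative-boundary states, in particular at pure states; this is the crux, and I would establish it by showing that an optimal decomposition of such a $\rho$ can be perturbed into decompositions of nearby states with arbitrarily small change in the weighted $E$-value, controlling the error through the uniform continuity of $E$ on the compact set $X$. Combining lower and upper semicontinuity yields continuity of $E$ on all of $\sS(\sH\otimes\sK)$.
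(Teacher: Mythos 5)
Your treatment of (i)--(iv) follows essentially the same route as the paper: non-negativity from the $\cL^2$-norm, vanishing iff $F_\rho\equiv 0$ (pointwise, since the functions involved are continuous), invariance under local unitaries via the identity $F_{U\otimes V\rho}(p_1,p_2)=F_\rho(U^*p_1U,V^*p_2V)$ and the unitary invariance of $\nu_\sH,\nu_\sK$, and then Lemma \ref{lemma} to push (ii), (iii) and (iv) from the extreme points to the convex hull. Your explicit verification that $X_0=\sS_p^{sep}(\sH\otimes\sK)$ is compact (as the continuous image of $\sP(\sH)\times\sP(\sK)$ under $Seg$ followed by $\cS$) is a hypothesis of the Lemma that the paper uses without checking, so that is a genuine small improvement.

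The divergence is in (v), and you have correctly diagnosed the issue. The paper's own proof of (v) only establishes sequential continuity of $E$ restricted to the set of \emph{pure} states, via dominated convergence applied to $\rho_n\circ Seg$ and the partial integrals; it never addresses continuity of the convex-roof extension on all of $\sS(\sH\otimes\sK)$, even though the proposition is stated for the full map. You are right that this is not automatic: a convex function finite on a compact convex body can fail to be upper semicontinuous at relative-boundary points, and the convex roof of a continuous function on the extreme points requires an argument. Your outline (Lipschitz continuity of $E$ on the pure states, Carath\'eodory reduction of the decompositions, attainment of the infimum, lower semicontinuity by extracting convergent optimal decompositions) is sound as far as it goes. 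However, the step you yourself identify as the crux --- upper semicontinuity at relative-boundary states, i.e. showing that an optimal decomposition of $\rho$ can be deformed into near-optimal decompositions of nearby states with uniformly small error --- is left as a plan rather than carried out, and it is precisely the nontrivial part (it is where one uses the specific geometry of the quantum state space, not just compactness and convexity). So your (v) is more honest and more ambitious than the paper's, but as written it is still incomplete at exactly the point where the real work lies; the paper, for its part, simply does not prove (v) for mixed states at all.
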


\begin{proof}
i) $E(\rho)$ is the integral of a non-negative function for any pure state $\rho$. Convex combinations preserve non-negativity.
\\
ii) The non-negative function $|F_\rho|^2$ vanishes everywhere on $\sP(\sH)\times\sP(\sK)$ if and only if $\rho$ is a separable pure state. The proof for mixed state is in iv) below.
\\
iii) The action of the unitary group on $\cF^2(\sH_n)$ is given by $[U(f)](p)=f(UpU^{-1})$, where $U\in U(n)$ and we used the same symbol for the representative operator. We need to prove that:
$$E(U\otimes V \rho)=E(\rho),$$
for every $U\in U(n)$, $V\in U(m)$ where $\dim\sH=n$ and $\dim\sK=m$.
$$E(U\otimes V \rho)=\left(\int |F_{U\otimes V\rho}(p_1,p_2)|^2 d\nu_\sH(p_1) d\nu_\sK (p_2)\right)^{\frac{1}{2}}\qquad\qquad\qquad\qquad\qquad$$
$$= \left(\int |F_{\rho}(Up_1U^{-1},Vp_2V^{-1})|^2 d\nu_\sH(p_1) d\nu_\sK (p_2)\right)^{\frac{1}{2}}\quad$$
$$=\left( \int |F_{\rho}(p_1,p_2)|^2 d\nu_\sH(Up_1U^{-1}) d\nu_\sK (Vp_2V^{-1})\right)^{\frac{1}{2}}\quad $$
$$= \left(\int |F_{\rho}(p_1,p_2)|^2 d\nu_\sH(p_1) d\nu_\sK (p_2) \right)^\frac{1}{2}= E(\rho),\qquad\,\,\,$$
\\
where we used the unitary invariance of the measures $\nu_\sH$ and $\nu_\sK$. The identity  $F_{U\otimes V\rho}(p_1,p_2)=F_{\rho}(Up_1U^{-1},Vp_2V^{-1})$, that is valid for any pair $(p_1,p_2)$, can be checked directly from definition (\ref{F}). The result holds even for mixed states, see lemma \ref{lemma}.
\\
v) Consider a sequnece of pure states $\{\rho_n\}$ that is uniformly convergent to $\rho\in\sS_p(\sH\otimes\sK)$, thus we have the pointwise convergence $\rho_n\rightarrow \rho$ as $n\rightarrow\infty$. Then $\rho_n\circ Seg\rightarrow \rho\circ Seg$ pointwise.
\\
$\{\rho_n\circ Seg\}$ is a sequence of positive bounded functions thus it is dominated by an integrable function and we can apply the dominated convergence theorem, obtaining:
$$\lim_{n\rightarrow \infty} {\rho_n}_\sK=\lim_{n\rightarrow \infty}\int (\rho_n\circ Seg)d\nu_\sK=\int (\rho\circ Seg)d\nu_\sK=\rho_\sK.$$
There is pointwise convergence of the sequences of partial integrals: ${\rho_n}_\sK\rightarrow \rho_\sK$, ${\rho_n}_\sH\rightarrow \rho_\sH$.
Thus we have the following pointwise limit:
$$\lim_{n\rightarrow\infty} F_{\rho_n}(p_1,p_2)=F_\rho(p_1,p_2)\qquad\forall (p_1,p_2)\in\sP(\sH)\times\sP(\sK).$$
Applying the dominated convergence theorem once again:
$$\lim_{n\rightarrow\infty}E(\rho_n)=\lim_{n\rightarrow\infty}\sqrt{\int |F_{\rho_n}|^2d\nu_\sH d\nu_\sK}=\sqrt{\int |F_\rho|^2d\nu_\sH d\nu_\sK}=E(\rho).$$
\\
iv) We apply lemma \ref{lemma}. $E:\sS(\sH\otimes\sK)\rightarrow \bR^+$ is the convex extension of a non-negative continuous function defined on the extremal elements of $\sS(\sH\otimes\sK)$ that vanishes on the separable pure states, then it is a convex function vanishing exactly on the set of separable states.

\end{proof}
\vspace{0.5cm}
\noindent
In standard QM, state distinguishability is quantified by the \emph{trace-distance} between density matrices: $d(\sigma,\sigma')=\frac{1}{2}\p \sigma-\sigma'  \p_1=\frac{1}{2}tr(|\sigma-\sigma'|)$. Thus a good entanglement measure on the set of density matrices should be continuous w.r.t. the topology induced by $\p\,\,\p_1$. 
If $\sH$ is a finite-dimensional Hilbert space then the topology induced by $\p \,\,\p_1$ on $\gB(\sH)$ coincides with the topology induced by the norm $\p T\p:=\sup_{\p\psi\p=1} |\langle \psi |T \psi\rangle|=\sup_{p\in\sP(\sH)}|tr(Tp)|=\p\cS(T)\p_\infty$. For this reason the continuity w.r.t. the uniform norm topology is remarkable in order to use $E$ as an entanglement measure.

\begin{definition}
Let $\sS(\sH\otimes\sK)$ be the set of Liouville densities on $\sP(\sH\otimes\sK)$ describing physical states of a bipartite quantum system. The map $E:\sS(\sH\otimes\sK)\rightarrow \bR$ defined on the extremal points by (\ref{entmeasure}) and extended by (\ref{convexext}) to the convex hull is called {\bf standard Hamiltonian entanglement measure}.
\end{definition}

In the introductory section, we stress that the Hilbert-Schmidt distance between density matrices coincides up to a multiplicative constant with the $\cL^2$-distance between associate Liouville densities. Thus we can express in terms of Liouville densities a well-known entanglement measure defined as the Hilbert-Schmidt distance of an entangled state from the set of separable states. Consider a density matrix $\sigma_*\in\gB(\sH)$ of a bipartite system, an entanglement measure proposed in  \cite{Witte} is:

\begin{equation}\label{HSO}
D(\sigma_*)=\min_{\sigma\in\sS^{sep}}\p\sigma-\sigma_*\p_{HS},
\end{equation}
\\   
where $\sS^{sep}$ is the convex set of separable density matrices. Thus we can introduce another measure of entanglement carried by a Liouville density $\rho\in\sS(\sH\otimes\sK)$ in thi way:

\begin{equation}\label{HSm}
D(\rho)=\min_{\eta\in\sS^{sep}(\sH\otimes\sK)}\sqrt{\int_{\sP(\sH\otimes\sK)} |\rho-\eta|^2 d\nu}.
\end{equation}
\\
Even this definition is based on a $\cL^2$-distance but in the space $\cL^2(\sP(\sH\otimes\sK),d\nu)$ instead in $\cL^2(\sP(\sH)\times\sP(\sK),d\nu_\sH d\nu_\sK)$ like in our proposal. The letter has no a direct analogous in the standard formalism of density matrices, because $\mL^2$-distance is computed for functions that are not defined on the projective space $\sP(\sH\otimes\sK)$ but on the cartesian product $\sP(\sH)\times\sP(\sK)$ which is the \emph{classical-like} phase space of the bipartite system.

The entanglement measure based on Hilbert-Schmidt distance is connected with the violation degree of a generalized Bell inequality as shown by Bertlmann-Narnhofer-Thirring theorem \cite{thirring}. To study this connection from the point of view of Hamiltonian formalism we introduce the \emph{witness inequality} in the next section.

\vspace{0.5cm}

\section{Separability criteria for Liouville densities}
Using the developed machinery we can translate two celebrated separability criteria in the language of Hamiltonian formulation. 

\begin{proposition}\label{sep1}
For any Liouville density $\rho\in\sS(\sH\otimes\sK)$ representing an entangled state there is an observable $f:\sP(\sH\otimes\sK)\rightarrow \bR$ (called \textbf{entanglement witness}) such that:
\begin{equation}\label{WI}
\int_{\sP(\sH\otimes\sK)} f\rho d\nu< 0\qquad\mbox{and}\qquad\int_{\sP(\sH\otimes\sK)} f\eta d\nu\geq 0, 
\end{equation}
\\
for every separable Liouville density $\eta$.
\end{proposition}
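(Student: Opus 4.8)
The statement is the Hamiltonian reformulation of the standard entanglement-witness theorem, so the natural strategy is a finite-dimensional Hahn--Banach separation carried out directly in the space $\cF^2(\sH\otimes\sK)$. First I would fix the convex geometry. By Definition~\ref{sep-ent} the separable states form the convex hull $\sS^{sep}(\sH\otimes\sK)=conv[\sS_p^{sep}(\sH\otimes\sK)]$, and since the separable pure states are the image of the compact set $\sP(\sH)\times\sP(\sK)$ under the continuous Segre embedding followed by $\cS$, the set $\sS_p^{sep}(\sH\otimes\sK)$ is compact; hence its convex hull $\sS^{sep}(\sH\otimes\sK)$ is a compact convex subset of the finite-dimensional space $\cF^2(\sH\otimes\sK)\simeq\gB(\sH\otimes\sK)$. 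Because $\rho$ is entangled, by definition $\rho\notin\sS^{sep}(\sH\otimes\sK)$.

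Next I would separate. Working in the real vector space of real (i.e. physical) frame functions endowed with the $\cL^2(\sP(\sH\otimes\sK),\nu)$ inner product, the point $\rho$ lies outside the compact convex set $\sS^{sep}$, so the separating-hyperplane theorem yields a nonzero real linear functional $\Lambda$ and a threshold $c\in\bR$ with $\Lambda(\rho)<c\leq\Lambda(\eta)$ for every $\eta\in\sS^{sep}(\sH\otimes\sK)$. Since $\cF^2(\sH\otimes\sK)$ is finite-dimensional, Riesz representation in this $\cL^2$ Hilbert space gives a real frame function $g$ with $\Lambda(\psi)=\int_{\sP(\sH\otimes\sK)} g\,\psi\,d\nu$ for all $\psi$; equivalently one may transport the problem to $\gB(\sH\otimes\sK)$ via $\cS^{-1}$ and invoke the pairing $\int f_A\rho_\sigma\,d\nu=tr(A\sigma)$, in which case $g=\cO(W)$ for the usual Hermitian witness $W$.

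Finally I would normalize the threshold to zero. Every Liouville density is normalized, $\int_{\sP(\sH\otimes\sK)}\rho\,d\nu=1$, and the constant functions belong to $\cF^2(\sH\otimes\sK)$ (they are the frame functions associated with multiples of $\bI$), so setting $f:=g-c$ keeps $f$ a real observable while shifting both integrals uniformly by $-c$. This gives $\int f\rho\,d\nu<0$ and $\int f\eta\,d\nu\geq0$ for all separable $\eta$, which is the assertion. The main obstacle is the soft-analysis input rather than any computation: one must verify that $\sS^{sep}(\sH\otimes\sK)$ is genuinely closed (so that strict separation of the external point $\rho$ is available) and that the representing functional lands inside $\cF^2$ as a bona fide observable, both of which hinge on finite-dimensionality and on the isomorphism $\cF^2\simeq\gB$ used throughout the paper.
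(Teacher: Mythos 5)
Your argument is correct, but it is a genuinely different route from the one in the paper. The paper's proof is essentially a two-line transport argument: it quotes the operator-level witness theorem (Lemma 1 of \cite{Horodecki}), which supplies a Hermitian $A$ with $tr(A\sigma)<0$ and $tr(A\theta)\geq 0$ for all separable $\theta$, and then applies the pairing $tr(A\sigma)=\int f_A\rho_\sigma\,d\nu$ to read the two inequalities off in the function picture with $f=\cO(A)$, $\rho=\cS(\sigma)$, $\eta=\cS(\theta)$. You instead re-prove the witness theorem natively inside $\cF^2(\sH\otimes\sK)$: compactness of $\sS_p^{sep}$ as the continuous image of $\sP(\sH)\times\sP(\sK)$, compactness of its convex hull in finite dimension, strict separation of the external point $\rho$, Riesz representation of the separating functional in the $\cL^2(\nu)$ inner product, and the shift $f:=g-c$ using normalization of Liouville densities and the fact that constants are frame functions (images of multiples of $\bI$). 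All of these steps are sound; in particular your use of the plain $\cL^2$ pairing rather than the trace pairing is harmless, since either non-degenerate form represents the functional by a real element of $\cF^2$, which is exactly what the paper means by an observable. What the paper's route buys is brevity and the explicit identification $f=\cO(A)$ with the standard Hermitian witness; what your route buys is self-containedness --- it shows the separation argument can be carried out entirely in the geometric Hamiltonian formalism without importing the operator-algebraic lemma, at the cost of having to verify the closedness of $\sS^{sep}(\sH\otimes\sK)$, which you correctly flag and settle via finite-dimensionality.
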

\begin{proof}
For any entangled density matrix $\sigma$ in $\sH\otimes\sK$ there is a Hermitian operator $A$ (i.e. a quantum observable) such that $tr(A\sigma)<0$ and $tr(A\theta)\geq0$ for every separable density matrix $\theta$ (see e.g.  Lemma 1 \cite{Horodecki}). Applying trace-integral formulas:

$$tr(A\sigma)=\int_{\sP(\sH\otimes\sK)} f\rho d\nu\quad,\quad tr(A\theta)=\int_{\sP(\sH\otimes\sK)} f\eta d\nu$$
\\
where $\rho=\cS(\sigma)$, $\eta=\cS(\theta)$ and $f=\cO(A)$, i.e. $f$ represents a quantum observable.\\
\end{proof}
\vspace{0.5cm}
\noindent
To make above result useful, we recall when a real function $f\in\cL^2(\sP(\sH\otimes\sK),d\nu)$ represents a quantum observable (i.e. when it verifies $f=\cO(A)$ for some  $A\in iu(n)$). A necessary and sufficient condition, obtained applying proposition 25 in \cite{DV2}, is:

\begin{equation}
\int_{\sP(\sH\otimes\sK)} f \cS(p_0) d\nu =d^2 f(p_0),
\end{equation}
\\
for every pure state $p_0$, where $d=\dim\sH\times\dim\sK$. 

An \emph{entanglement witness} can be defined as a non-positive observable such that its expectation value on every separable state is a positive number. The second inequality in (\ref{WI}) is violated by an entangled state, first equation in (\ref{WI}). The violation of the inequality:  

\begin{equation}\label{Bell}
\int_{\sP(\sH\otimes\sK)} f\rho d\nu\geq 0\qquad \mbox{with $f$ entanglement witness}
\end{equation}       
\\
is a good criterion to test if a state is entangled, it can be called \emph{generalized Bell inequality} in the Hamiltonian formalism.
 The maximal violation of operatorial generalized Bell inequality is connected with Hilbert-Schmidt entanglement measure (\ref{HSO}) by the Bertlmann-Narnhofer-Thirring theorem \cite{thirring}. Thus (\ref{Bell}) and (\ref{HSm}) can be used to obtain a Hamiltonian version of BNT theorem.

\begin{proposition}\label{sep2}
A Liouville density $\rho\in\sS(\sH\otimes\sK)$ describes a separable state if and only if:
\begin{equation}\label{1}
\int_{\sP(\sH\otimes\sK)} \rho f d\nu\geq 0,
\end{equation}
for any quantum observables $f$ satisfying:
\begin{equation}\label{2}
\int_{\sP(\sH\otimes\sK)}\eta_1\diamond\eta_2 f d\nu\geq 0
\end{equation}
 for all $\eta_1\in\cF^2(\sH)$ and 
$\eta_2\in\cF^2(\sK)$ such that:
\beq\label{A}
G(d\eta_1,d\eta_1)=2 (\eta_1-\eta_1^2)
\eeq
\beq\label{B}
R(d\eta_2,d\eta_2)=2 (\eta_2-\eta_2^2)
\eeq
\\
\noindent
where $G$ and $R$ are the scalar products of one-forms respectively induced by the Fubini-Study metrics on $\sP(\sH)$ and $\sP(\sK)$.
\end{proposition}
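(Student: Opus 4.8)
The plan is to strip the two geometric constraints down to their operator content and then invoke the standard witness criterion. First I would read off what equations (\ref{A}) and (\ref{B}) say about the operators behind $\eta_1,\eta_2$. Since $\dim\sH>2$ we may write $\eta_1=\cS_\sH(\sigma_1)$ with $\sigma_1=\cS_\sH^{-1}(\eta_1)\in\gB(\sH)$, self-adjoint because $\eta_1$ is real. A direct computation from the Fubini--Study metric (\ref{fs}), together with the dualization identifying the one-form $d\eta_1$ with $\sigma_1$, yields (in the normalization $\kappa=n+1$) the pointwise identity $G(d\eta_1,d\eta_1)(p_1)=2\big(tr(\sigma_1^2 p_1)-tr(\sigma_1 p_1)^2\big)$. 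Hence (\ref{A}) holds for every $p_1\in\sP(\sH)$ if and only if $tr(\sigma_1^2 p_1)=tr(\sigma_1 p_1)$ for all rank-one projectors $p_1$, i.e. $\sigma_1^2=\sigma_1$; thus $\eta_1$ satisfies (\ref{A}) exactly when $\sigma_1$ is an orthogonal projector. The identical argument with $R$ in place of $G$ characterizes $\eta_2$ via (\ref{B}).

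Next I would translate condition (\ref{2}) into operator language. By Proposition \ref{product form}, $\eta_1\diamond\eta_2=\cS(\sigma_1\otimes\sigma_2)$ is the Liouville density of $\sigma_1\otimes\sigma_2$; writing $f=\cO(A)$ for the self-adjoint $A=\cO^{-1}(f)$, the expectation-value formula gives $\int_{\sP(\sH\otimes\sK)} (\eta_1\diamond\eta_2)\, f\, d\nu = tr\big((\sigma_1\otimes\sigma_2) A\big)$, so (\ref{2}) becomes $tr\big((\sigma_1\otimes\sigma_2)A\big)\ge 0$ for all projectors $\sigma_1,\sigma_2$. Because every projector is a finite sum of rank-one projectors and the trace is additive, this is equivalent to $tr\big((|\phi_1\rangle\langle\phi_1|\otimes|\phi_2\rangle\langle\phi_2|)A\big)\ge 0$ for all unit vectors $\phi_1\in\sH,\phi_2\in\sK$, hence by convexity to $tr(\theta A)\ge 0$ for every separable density matrix $\theta$. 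In other words, $f$ runs exactly over the observables whose operator $A$ is an \emph{entanglement witness}. Likewise, by (\ref{trace-integral}) and the same formula, condition (\ref{1}) reads $tr(\sigma A)\ge 0$, where $\sigma=\cS^{-1}(\rho)$.

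With these reductions the proposition becomes the operator-algebraic statement: $\sigma$ is separable if and only if $tr(\sigma A)\ge 0$ for every Hermitian $A$ non-negative on all separable states. The forward implication is immediate, since a separable $\sigma$ is a convex combination of product states and each witness is non-negative on them. For the converse I would argue by contraposition exactly as in Proposition \ref{sep1}: the set of separable density matrices is a closed convex subset of the finite-dimensional real space of Hermitian operators, so if $\sigma$ is entangled the Hahn--Banach separation theorem (equivalently Lemma~1 of \cite{Horodecki}) furnishes a Hermitian $A$ with $tr(\sigma A)<0$ and $tr(\theta A)\ge 0$ for all separable $\theta$; the observable $f=\cO(A)$ then satisfies (\ref{2}) while violating (\ref{1}).

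The main obstacle is the geometric lemma of the first step, namely that the differential constraints (\ref{A}), (\ref{B}) single out precisely the projector densities. Computing $G(d\eta_1,d\eta_1)$ from (\ref{fs}) is routine once the correct dualization between $d\eta_1$ and $\sigma_1$ is in place, but care is needed both in tracking the normalization constant fixed by $\kappa=n+1$ and in arguing that projectors of arbitrary rank reduce, for the purposes of (\ref{2}), to rank-one ones — the point that lets the constraints reproduce the full separable cone rather than a proper subset of it. Once this identification is secured, the remainder is bookkeeping with the trace-integral formulas and an appeal to the already-established separation argument.
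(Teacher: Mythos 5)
Your proposal is correct and follows essentially the same route as the paper: conditions (\ref{A})--(\ref{B}) are shown to characterize exactly the frame functions of orthogonal projectors, the integrals are converted to traces via the trace-integral/expectation-value formulas, and the whole statement reduces to Theorem~1 of \cite{Horodecki}. The only differences are cosmetic: the paper obtains your pointwise identity $G(d\eta_1,d\eta_1)(p_1)=2\left(tr(\sigma_1^2p_1)-tr(\sigma_1p_1)^2\right)$ not by a direct Fubini--Study computation but from the product formula $\cS(\tau\tau')=\frac{i}{2}\{g,g'\}_{PB}+\frac{1}{2}G(dg,dg')+gg'$ quoted from \cite{DV2}, and it simply cites the Horodecki separation criterion where you additionally sketch its Hahn--Banach proof and spell out the (correct, and equally implicit in the paper) reduction from arbitrary projectors to rank-one ones.
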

\begin{proof}
A density matrix $\sigma$ on $\sH\otimes\sK$ is separable if and only if $tr(\sigma A)\geq 0$ for any Hermitian operator $A$ satisfying $tr((P\otimes Q)A)\geq 0$ for all orthogonal projectors $P$ and $Q$ on $\sH$ and $\sK$ respectively (e.g. theorem 1 in \cite{Horodecki}). The statement of the proposition is the translation of this fact in the classical-like functions formalism. Consider the real functions $g=\cS(\tau)\in\cF^2(\sH)$ and $g'=\cS(\tau')\in\cF^2(\sH)$ for $\tau,\tau'\in\gB(\sH)$ , then a direct computation \cite{DV2} produces:

$$\cS(\tau\tau')=\cS(\cS^{-1}(g)\cS^{-1}(g'))=\frac{i}{2}\{g,g'\}_{PB}+\frac{1}{2}G(dg,dg')+gg'.$$
\\
If $\tau'=\tau$ then the Poisson bracket is zero, moreover if $\tau$ is an orthogonal projector:

\beq\label{idem}
g=\cS(\tau)=\frac{1}{2}G(dg,dg)+g^2,
\eeq
\\
i.e. $G(dg,dg)=2(g-g^2)$. The converse is true because $\cS$ is bijective: If $g=\cS(\tau)$ satisfies (\ref{idem}) then $\tau$ is an orthogonal projector. 

If $\eta_1\in\cF^2(\sH)$ satisfies (\ref{A})  and $\eta_2\in\cF^2(\sK)$ satisfies (\ref{B}) then the operators $P=\cS_\sH^{-1}(\eta_1)$ and $Q=\cS_\sK^{-1}(\eta_2)$ are orthogonal projectors. And using (\ref{frametp}) as the definition of $\diamond$ we have $\cS(P\otimes Q)=\eta_1\diamond\eta_2$. We can use trace-integral formulas as in the proof of proposition \ref{sep1} to obtain (\ref{1}) and (\ref{2}). 
\end{proof}
\vspace{0.5cm}

\section{Conclusions and perspectives}

Finite-dimensional quantum systems can be described as Hamiltonian systems in a classical-like fashion via an \emph{inverse quantization} prescription associating functions on the projective space $\sP(\sH)$ to operators in $\gB(\sH)$ in order to give a classical-like representation of quantum observables and quantum states which can be \emph{completely} characterized in these terms \cite{DV2} (exploiting the machinery of \emph{frame functions}). This paper discussed a \emph{re-quantization} of the Hamiltonian formulation to obtain a quantum theory in the standard formalism (self-adjoint operators and density matrices on Hilbert space). Then we have used some developed machinery to apply the Hamiltonian formulation to the description of composite quantum systems and study some basic issues of relevance in quantum information theory: In particular introducing the notions of quantum entanglement and entanglement measure. We have explored the connection between the phase space of a composite system $\sP(\sH\otimes\sK)$ and its classical counterpart $\sP(\sH)\times\sP(\sK)$ that is embedded in the first one by Segre embedding, the entanglement measure is constructed in terms of an integration over $\sP(\sH)\times\sP(\sK)$. Moreover an analogous of the \emph{partial trace} is defined and its interpretation as a marginal probability measure is pointed out. The so-called \emph{partial integral} has been used to define the standard Hamiltonian entanglement measure discussed as a natural idea to quantify entanglement of a quantum state described by a Liouville density on $\sP(\sH\otimes\sK)$. Several properties, generally required to an entanglement measure are checked considering the proposed standard Hamiltonian entanglement measure. 

The analogous of Hilbert-Schmidt entanglement measure has been introduced in terms a of $\mL^2$-distance from the set of separable states and its connection with \emph{generalized Bell inequality} is discussed in terms of Hamiltonian formalism. Finally, propositions \ref{sep1} and \ref{sep2} establish two separability criteria, more precisely two known separability criteria for density matrices have been characterized in terms of Liouville densities.

From the physical point of view, an urgent open issue is probably the determination of finite-dimensional composite quantum systems for which the presented Hamiltonian approach is definetly more convenient than the standard QM.
A subject for further works could be the definition of \emph{completely positive maps} within Hamiltonian framework and their characterization in terms of  equivalent versions of Stinespring's decomposition and Krauss operators in order to state a theory for open systems. Moreover the notion of \emph{quantum operations} can lead to a pure geometric Hamiltonian formulation of basic tasks of quantum computing. 
\vspace{0.5cm}

\vspace{1cm}


\begin{thebibliography}{References}

\bibitem[AGMV10]{AGMV1}
P. Aniello, J. Clemente-Gallardo, G. Marmo, G.F. Volkert. {\em Classical tensors and quantum entanglement I: Pure states}. Int.J.Geom.Meth.Mod.Phys. 7(3) (2010) pp. 485-503



\bibitem[AGMV11]{AGMV2}
P. Aniello, J. Clemente-Gallardo, G. Marmo, G.F. Volkert. {\em Classical tensors and quantum entanglement II: Mixed states}. Int.J.Geom.Meth.Mod.Phys.08:853-883 (2011)






\bibitem[AGMV12]{AGMV}
P. Aniello, J. Clemente-Gallardo, G. Marmo, G.F. Volkert. {\em From Gepometric Quantum Mechanics to Quantum Information}
arXiv:1101.0625


\bibitem[AS95]{AS} Ashtekar and T.A. Schilling. {\em Geometry of quantum mechanics}, AIP Conference Proceedings, 342, 471-478 (1995).





\bibitem[BZ06]{BZ}
I.Bengtsson and K. Zyczkowski. \emph{Geometry of Quantum States. An introduction to Quantum Entanglement}. Cambridge University Press (2006).


\bibitem[BSS04]{BSS04} A.Benvegn\`u, N.Sansonetto and M.Spera
{\em Remarks on geometric quantum mechanics},
Journal of Geometry and Physics {\bf 51}  229-243 (2004)




\bibitem[BNT02]{thirring}
R.A.Bertlmann, H.Narnhofer,W.Thirring. \emph{Geometric picture of entanglement and Bell inequalities}. Phys. Rev. A {\bf 66} (2002) 032319







\bibitem[CGM07]{CCGM}
J.F.Carinena, J.Clemente-Gallardo, G.Marmo. \emph{Geometrization of Quantum Mechanics}. Theor.Math.Phys.152:894-903 (2007)




\bibitem[FFIM13]{FFIM13}
F.Falceto, L.Ferro,  A.Ibort and G. Marmo. {\em  Reduction of Lie-Jordan Banach algebras and quantum states }
 J. Phys. A: Math. Theor. {\bf 46}  015201 (2013)


\bibitem[FFMP13]{lieJordan}  P.Facchi,  L.Ferro, G. Marmo and
S. Pascazio, {\em Defining quantumness via the Jordan product}, Preprint 
arXiv:1309.4635v1


\bibitem[Gib92]{Gibbons} G.W. Gibbons. {\em Typical states and density matrices}
Journal of Geometry and Physics {\bf 8}  147-162  (1992)

\bibitem[Gle57]{Gleason} A. M. Gleason. \emph{Measures on the closed subspaces of a Hilbert space}, Journal of Mathematics and Mechanics, Vol.6, No.6, 885-893 (1957).





\bibitem[GCM05]{GCM2005} J. Grabowski, M. Ku\'s and G. Marmo. {\em Geometry of quantum systems: density states and
entanglement}, J. Phys. A: Math. Gen. {\bf 38}  10217-10244 (2005) 

\bibitem[GCM11]{GCM2011} J. Grabowski, M. Ku\'s and G. Marmo. {\em Segre maps and entanglement for multipartite systems of indistinguishable particles}, J. Phys. A: {\bf 38}  105301 (2011) 







\bibitem[HHH96]{Horodecki} M. Horodecki, P. Horodecki, R. Horodecki {\em Separability of mixed states: Necessary and sufficient conditions }, Phys. Lett. {\bf A 223}, no. 1-2,1-8 (1996)




\bibitem[Keyl02]{Keyl} M. Keyl {\em Fundamentals of Quantum Information Theory}, Phys. Rep. {\bf 369}, no. 5 431-548 (2002)




\bibitem[Kib79]{Kibble} T.W.B. Kibble {\em Geometrization of Quantum Mechanics}, Commun. Math. Phys. {\bf 65}, 189-201 (1979)


\bibitem[MP13]{DV1}  V. Moretti and D. Pastorello. {\em  Generalized Complex Spherical Harmonics, Frame Functions, and Gleason Theorem}
Ann. Henri Poincar\'e {\bf 14},1435-1443  (2013).


\bibitem[MP14]{DV2}  V. Moretti and D. Pastorello. {\em  Frame functions in finite-dimensional Quantum Mechanics and its Hamiltonian formulation on complex projective spaces},
Preprint arXiv:1311.1720.



\bibitem[SHK12]{kus}  A.Sawicki, A. Huckelberry, M.Ku\'s. {\em Symplectic geometry of entanglement},
arXiv:1007.1844v2.
\bibitem[VPRK97]{Plenio} V. Vedral, M.B. Plenio, M.A. Rippin, P.L. Knight. {\em Quantifying entanglement}, Phys.Rev.Lett. 78.12 (1997), p. 4.




\bibitem[WT98]{Witte}
C.Witte, M.Trucks. \emph{ A new entanglement measure induced by the Hilbert-Schmidt norm}.
Phys.Lett.A 257 (1998) p.12.






\end{thebibliography}
\end{document}